\newcommand{\fakebf}{\fontfamily{ptm}\fontseries{b}\selectfont}
\DeclareTextFontCommand{\textbf}{\fakebf}
\newtheorem{theor}{\textbf{Theorem}}[section] 
\newtheorem{prop}{\textbf{Proposition}}[section] 
\newcommand{\ttt}{\mbox{\bf t}}
\newcommand{\ff}{\mbox{\bf f}}
\newcommand{\acc}{\ensuremath{\mathsf{Y}}}
\newcommand{\rej}{\ensuremath{\mathsf{N}}}
\newcommand{\nacc}{{\mathrel{\rotatebox[origin=c]{180}{$\acc$}}}}
\newcommand{\nrej}{{\mathrel{\reflectbox{$\rej$}}}}
\newcommand{\ag}[1]{s_{#1}}
\newcommand{\agn}{\ag{}}
\newcommand{\Efrac}[2]{%
  \mathchoice
    {\ooalign{%
      $\genfrac{}{}{1.2pt}0{#1}{#2}$\cr%
      $\color{white}\genfrac{}{}{.4pt}0{\phantom{#1}}{\phantom{#2}}$}}%
    {\ooalign{%
      $\genfrac{}{}{1.2pt}1{#1}{#2}$\cr%
      $\color{white}\genfrac{}{}{.4pt}1{\phantom{#1}}{\phantom{#2}}$}}%
    {\ooalign{%
      $\genfrac{}{}{1.2pt}2{#1}{#2}$\cr%
      $\color{white}\genfrac{}{}{.4pt}2{\phantom{#1}}{\phantom{#2}}$}}%
    {\ooalign{%
      $\genfrac{}{}{1.2pt}3{#1}{#2}$\cr%
      $\color{white}\genfrac{}{}{.4pt}3{\phantom{#1}}{\phantom{#2}}$}}%
}
\newcommand{\Bcon}[4]{\frac{#2}{#1}{\big|}\frac{#4}{#3}}
\newcommand{\Bent}[4]{\Efrac{#2}{#1}{\big|}\Efrac{#4}{#3}}
\newcommand{\nBent}[4]{\Efrac{#2}{#1}{\mathclap{\smash{\,\,\times}}\big|}\Efrac{#4}{#3}}
\newcommand{\BconCONT}[4]{\Bcon{#2\Gamma}{#1\Psi}{\Phi#4}{\Delta#3}}
\titlespacing{\section}{0pt}{5ex plus .1ex minus .2ex}{1pc}
\renewenvironment{quote}%
  {\list{}{\leftmargin=2cm\rightmargin=0cm}\item[]}%
  {\endlist}
\title{Revisiting the Dunn-Belnap logic}
\author{Carolina Blasio}
\date{translated by Evelyn Erickson\\
revised by Jo\~ao Marcos}
\begin{document}

\maketitle


\begin{abstract}
\noindent In the present work I introduce a semantics based on the cognitive attitudes of acception and rejection entertained by a given society of agents for logics inspired on Dunn and Belnap’s First Degree Entailment ($\mathbf{E}$). In contrast to the epistemic situations originally employed by $\mathbf{E}$, the cognitive attitudes do not coincide with truth-values and they seem more suitable to logics that intend to consider the informational content of propositions ``said to be true'' as well as of propositions ``said to be false'' as determinant of the notion of logical validity. After analyzing some logics associated to the proposed semantics, we introduce the logic $\mathbf{E}^B$, whose underlying entailment relation---the $B$-entailment---is able to express several kinds of reasoning involving the cognitive attitudes of acceptance and rejection. A sound and complete sequent calculus for $\mathbf{E}^B$ is also presented.\footnote{~Originally published as: Blasio, Carolina. Revisitando a Lógica de Dunn-Belnap. \emph{Manuscrito}, v.40, n.2, 2018, \href{http://dx.doi.org/10.1590/0100-6045.2017.V40N2.CB}{DOI 10.1590/0100-6045.2017.V40N2.CB}.
}
\\~\\
\noindent \textbf{Key words:} Dunn-Belnap logic. Consequence relation. Multivalued semantics.

\end{abstract}

\section{The Dunn-Belnap Logic}

In the mid 1970s, when research on artificial intelligence was consolidated, the demand emerged that a computer, in addition to being able to answer questions using deductive reasoning, should be able to deal with data even if they were to contain some inconsistency (explicit or otherwise) or partiality.
The creation of such a hypothetical artificial reasoner inspired the logic $\mathbf{E}$ of {First Degree Entailment}, or Dunn-Belnap logic \cite{Dunn1976,Belnap1977}, a relevant logic in which the classical principles of excluded middle and of explosion are not valid. 

The logic $\mathbf{E}$, as we will see in the following, has a semantics with four non-classical truth-values which allow, in a certain way, to deal with inconsistencies and with partial informational content. The logic $\mathbf{E}$, however, has a consequence relation defined in terms of the preservation of the set of truth-values which are assigned to the statements said to be true. This limitation, inherent to Tarskian consequence relations, does not seem adequate to a form of reasoning which intends to deal with statements said to be both true as well as with those said to be false, given that there is no complementarity 
between that which is said to be true and that which is said to be false.

The Dunn-Belnap logic is given by the structure:
 \[\mathbf{E}=\langle\mathnormal{S},\vDash^4\rangle\] 
 where $\mathnormal{S}$ is a language recursively formed by symbols of propositional variables, by the unary connective \emph{negation} ($\neg$) and by the binary logical connectives \emph{disjunction} ($\vee$) and \emph{conjunction} ($\wedge$), and where $\vDash^4$ is an entailment (here taken to be synonymous to `semantic consequence') relation.

The logic $\mathbf{E}$ has a semantics with four values structured in a bilattice known as $\mathit{FOUR}$. A lattice is a structure  $\langle B, \leq \rangle$ such that the elements of the set $B$ are ordered by the partial order relation 
$\leq$ and each pair $x,~y\in B$ has a supremum ($\sqcup$) and an infimum ($\sqcap$), defined by 
\[x\sqcap y=x\textrm{ ~iff~ }x\leq y,\]
\[ x\sqcup y=x\textrm{ ~iff~ }y\leq x.\]
A \emph{bilattice} is an algebraic structure $\mathfrak{B}=\langle B, \leq_t, \leq_i,-_t \rangle$ such that $\langle B, \leq_t \rangle$ and $\langle B, \leq_i \rangle$ are both lattices, and the inverse $-_t$ is a unary operation which satisfies the clauses that follow: for all $x,~y\in B$,
\begin{center}
$\mathrm{if }~ x\leq_t y $ ,~then~ $-_ty\leq_t-_tx,$
\end{center}
\begin{center}
$\mathrm{if }~ x\leq_i y $ ,~then~ $ -_tx\leq_i-_ty,$
\end{center}
\begin{center}
$x=-_t-_tx.$
\end{center}
The semantics of the logic $\mathbf{E}$ is given by the bilattice
\[\mathit{FOUR}{=}\langle~\mathbf{4},~\leq_t,~\leq_i,~-_t~\rangle,\] where $\mathbf{4}=\{\ff,~\bot,~\top,~\ttt\}$. We represent $FOUR$ by the Hasse diagram in the Figure \ref{fig:1} below.
%
\begin{figure}[ht]
\begin{center}
\ifx\du\undefined
  \newlength{\du}
\fi
\setlength{\du}{15\unitlength}
\begin{tikzpicture}
\pgftransformxscale{.85000000}
\pgftransformyscale{-.85000000}
\draw (25.000000\du,5.000000\du)--(27.000000\du,7.000000\du)--(25.000000\du,9.000000\du)--(23.000000\du,7.000000\du)--cycle;
\draw (25.000000\du,5.000000\du)--(27.000000\du,7.000000\du)--(25.000000\du,9.000000\du)--(23.000000\du,7.000000\du)--cycle;
\node[anchor=west] at (27.000000\du,7.000000\du){$\ttt$};
\node[anchor=west] at (22.00000\du,7.000000\du){$\ff$};
\node[anchor=west] at (24.3500000\du,4.500000\du){$\top$};
\node[anchor=west] at (24.3500000\du,9.500000\du){$\bot$};
{
\pgfsetarrowsstart{to}
\draw (22.000000\du,4.000000\du)--(22.000000\du,10.000000\du);
}
{
\pgfsetarrowsend{to}
\draw (22.000000\du,10.000000\du)--(28.000000\du,10.000000\du);
}
\node[anchor=west] at (20.50000\du,4.500000\du){$\leq_i$};
\node[anchor=west] at (27.00000\du,10.60000\du){$\leq_t$};
\end{tikzpicture}
\end{center}
\vspace{-1cm}
	\caption{$\mathit{FOUR}$}
	\label{fig:1}
\end{figure}

Each element of $\mathbf{4}$ corresponds to an element of the powerset $\wp(\{T, F\})$ of the set of the classical truth-values $T$ and $F$. These four truth-values have been referred, by Belnap, as ``epistemic situations''  
in which the informational content of a given proposition $\varphi$ is represented by the value given in Table \ref{tab:se}.
\begin{table}[ht]
\begin{center}
\small
\begin{tabular}{l@{ := }ll}\vspace{.2cm}
$\ff$&$\{F\}$ & if $\varphi$ \emph{is said to be only false};\\\vspace{.2cm}
$\bot$&$\emptyset$ & if $\varphi$ \emph{is neither said to be true nor said to be false};\\\vspace{.2cm}
$\top$&$\{F, T\}$ & if $\varphi$ \emph{is said to be true and said to be false} ;\\\vspace{.2cm}
$\ttt$&$\{T\}$& if $\varphi$ \emph{is said to be only true}.
\end{tabular}
\caption{\emph{Truth-values in terms of epistemic situations.}}
\label{tab:se}
\end{center}
\vspace{-.6cm}
\end{table}

\noindent Notice that the reading of the truth-values given in Table \ref{tab:se} implies that there are situations in which a given statement is considered ``neither said to be true nor said to be false'', or ``both said to be true and said to be false''. We should call attention as well to the fact that the values $\ff$ and $\ttt$ are not the classical values $F$ and $T$.

The order $\leq_t$ of $\mathit{FOUR}$ is, thus, known as the ``logical order'' because the elements of $\mathbf{4}$ are ordered from the ``said to be more false'' to the ``said to be more true''. The ``information order'' $\leq_i$, in turn, is generally understood as the order which goes from the ``lack of information'' to the ``excess of information''.\footnote{While this order often appears in the literature with the name of `knowledge order'
, we agree with \cite[p. 3]{Fitting2006} that the term `information' is more appropriate since it is more neutral with respect to the concepts of belief and truth.}

Formally,
 \[ x\leq_ty \textrm{ iff } x^{\ttt}\subseteq y^{\ttt}\textrm{ and }y^{\ff}\subseteq x^{\ff},\] for each $x, y\in\mathbf{4}$, given $a^{\ttt}=\{T\}\cap a$ (the true part of $a$) and $a^{\ff}=\{F\}\cap a$ (the false part of $a$); and
  \[x\leq_iy \textrm{ iff } x\subseteq y.\]

A valuation $v^4{:}\,\mathnormal{S}\to\mathbf{4}$ based in $FOUR$ ~is a homomorphism from the language $\mathnormal{S}$ to the structured truth-values in~$\mathbf{4}$. The semantics $SEM^{\mathbf{E}}$ of $\mathbf{E}$ is formed by all the valuations based on~$FOUR$. We will abbreviate by $v^4(\Gamma)=\{v^4(\gamma):\gamma\in\Gamma\}$ the result of applying of a valuation $v^4$ on a set of formulas $\Gamma\subseteq\mathcal{S}$.

      The symbol $\vDash^4$ represents the entailment relation 
      of $\mathbf{E}$, defined as a subset of $\wp(\mathnormal{S})\times\mathnormal{S}$ such that, for all $\alpha\in\mathnormal{S}$ and all $\Gamma\subseteq\mathnormal{S}$, 
  \[\Gamma\vDash^4\alpha\textrm{ ~iff~for all }v^4\in SEM^{\mathbf{E}},~~\sqcap_tv^4 (\Gamma)\leq_t v^4(\alpha), \]
  where $\sqcap_tv^4 (\Gamma)$ is the infimum of the set of truth-values assigned to the formulas of $\Gamma$ with respect to the order $\leq_t$. We can understand the definition of validity of $\vDash^4$ as: the inference $\Gamma\vDash^4\alpha$ is valid if, and only if, $\alpha$ is said to be at least as true as all the statements of $\Gamma$.
  
 An equivalent way \footnote{The equivalence between these definitions of consequence for $\mathbf{E}$ is proved in \cite{Arieli1998}, Proposition 4.14.} of defining $\vDash^4$ is from the logical matrix $\mathfrak{M}=\langle \mathbf{4}, \mathcal{F}, \mathcal{O}\rangle$, where $\mathbf{4}$ is the previously defined set of values, $\mathcal{F}$ is the set of designated values corresponding to the (prime) bifilter\footnote{The bifilter of $\mathit{FOUR}$ is the non-empty subset $\mathfrak{F}\subset\mathbf{4}$ such that, for each $x, y\in \mathbf{4}$,
  
   $x\sqcap_ty\in\mathfrak{F}$ iff $x\in\mathfrak{F}$ and $y\in\mathfrak{F}$, and
   
    $x\sqcap_iy\in\mathfrak{F}$ iff $x\in\mathfrak{F}$ and $y\in\mathfrak{F}$.\\ The bifilter $\mathfrak{F}$ of $\mathit{FOUR}$ is called \emph{prime} if 
    
    $x\sqcup_ty\in\mathfrak{F}$ iff $x\in\mathfrak{F}$ or $y\in\mathfrak{F}$, and 
    
    $x\sqcup_iy\in\mathfrak{F}$ iff $x\in\mathfrak{F}$ or $y\in\mathfrak{F}$. }
of $\mathit{FOUR}$, that is, $\mathcal{F}=\{\top, \ttt\}$, and the set $\mathcal{O}=\{\sqcap_t, \sqcup_t, -_t\}$ contains the operations of inversion, infimum and supremum of the order $\leq_t$ of $FOUR$, which correspond to the  truth-functions of each of the connectives of the language $\mathnormal{S}$, defined below:
  \[v^4(\neg\alpha)=-_tv^4(\alpha)\]
  \[v^4(\alpha\wedge\beta)=v^4(\alpha)\sqcap_t v^4(\beta)\]
  \[v^4(\alpha\vee\beta)=v^4(\alpha)\sqcup_t v^4(\beta)\]
  
 The consequence relation $\vDash^4$ associated to $\mathfrak{M}$ is such that, for all $\alpha\in\mathnormal{S}$ and for all $\Gamma\subseteq\mathnormal{S}$,
\[\Gamma\vDash^4\alpha\textrm{ iff there is no }v^4\in SEM^{\mathbf{E}} \textrm{ such that }v^4(\Gamma)\subseteq\mathcal{F}\textrm{ and }v^4(\alpha)\in\mathbf{4}-\mathcal{F}.\]

As with all Tarskian consequence relations, $\vDash^4$ respects the properties of Reflexivity, Monotonicity and Transitivity: where $\alpha,~\beta\in\mathnormal{S}$ and $\Delta,~\Gamma\subseteq\mathnormal{S}$,

\begin{description}

\item[\textbf{Reflexivity}]\textcolor{white}{}
$\Delta\cup\{\alpha\}\vDash^4\alpha$

\item[\textbf{Monotonicity}]\textcolor{white}{}
$\textrm{If }\Delta\vDash^4\alpha\textrm{~then~}\Delta\cup\Gamma\vDash^4\alpha$

\item[\textbf{Transitivity}]\textcolor{white}{}
$\textrm{If }\Delta\vDash^4\alpha\textrm{~and, for all~}\beta\in\Delta,~ \Gamma\vDash^4\beta\textrm{,~then~}\Gamma\vDash^4\alpha$
\end{description}

Notice that in the consequence relation $\vDash^4$ associated to $\mathfrak{M}$ only the values which have $T$, that is, only the values having ``the truth'', are preserved from the premises to the conclusion. This provides evidence to the fact that the logical order $\leq_t$ is the only one determining the consequence relation of $\mathbf{E}$.

I claim, however, that a logic that intends to deal with the informational content of statements, even in the presence of inconsistency or of partiality, should not focus only on the propositions ``said to be true'', in detriment of the propositions ``said to be false'', in determining the notion of validity of inferences. The semantics of $\mathbf{E}$ is defined in such a way that the statements ``said to be false'' are not the complement of the statements ``said to be true'', yet both are equally important for the reasoning proposed by the logic $\mathbf{E}$. Accordingly, the notion of validity of $\mathbf{E}$ should also take into consideration that which is ``said to be false''.

I thus propose an alternative reading based on ${FOUR}$ ~in which the definitions based on truth-values give way to the cognitive attitudes of acceptance and rejection of a given informational content by a society of agents. This amounts to choosing the cognitive attitudes of acceptance and rejection as primitive objects of the semantics, in detriment to the truth-values. In what follows, I present some logics, inspired by the Dunn-Belnap logic, which propose to express different forms of reasoning involving the informational content of propositions.

\section{Cognitive attitudes}

We have seen in the previous section that the semantics of the Dunn-Belnap logic $\mathbf{E}$ identifies each truth-value with an epistemic situation (cf.\ Table \ref{tab:se}). Such an identification generates readings of statements which are neither said to be true nor said to be false, and statements which are, at once, both said to be true and said to be false, even though the underlying notion of validity is defined in terms of the preservation of that which is only said to be true.

As an alternative, we propose a semantics formed by a society~$SOC$ of agents which, instead of assigning truth-values directly to the propositions, entertain cognitive attitudes towards the informational content of the conferred propositions: the cognitive attitudes of acceptance and rejection.

In $\mathbf{E}$, the cognitive attitude of an agent may be to ``accept'' ($\acc$), to ``not-accept'' ($\nacc$), to ``reject'' ($\rej$) or to ``not-reject'' ($\nrej$) the informational content of a certain statement $\varphi$. In the case of the logic $\mathbf{E}$, the attitudes of agents are to accept and to reject, but in other cases, the attitudes could be to vote against, to say that it is good, to like, etc.
  
Let $\mathnormal{S}$ be the language of the logic $\mathbf{E}$ and $COG=\{\acc, \nacc,\rej,\nrej\}$ be the set of cognitive attitudes.  An agent $\agn$ is a homomorphism from the language $\mathnormal{S}$ to $COG$. Although the truth-values are no longer primitive objects of the semantics, they may be defined in terms of the cognitive attitudes. We present below the canonical definition of the truth-values starting from the cognitive attitudes. Given an agent $\agn$, a $C\in COG$ and a $\varphi\in\mathnormal{S}$, we read $C\agn{:}\varphi$ as ``the agent $\agn$ entertains the cognitive attitude $C$ towards $\varphi$''. By consulting an agent $s\in SOC$ on the informational content of $\varphi\in\mathnormal{S}$, the truth-value assigned to $\varphi$ will be obtained as shown in Table~\ref{tab:atcog}.
One may also recover the cognitive attitudes from the structured truth-values of $\mathbf{4}$, as shown by Table \ref{tab:vdv}, where $\agn\in SOC$ and $\varphi\in\mathnormal{S}$.

\begin{table}[ht]
\begin{center}

\begin{tabular}{l@{: }l}\vspace{.2cm}
$\mathbf{f}$ &{if $ \nacc\agn{:}\varphi$ and $ \rej\agn{:}\varphi$ }\\\vspace{.2cm}
$\bot$ &{if $ \nacc\agn{:}\varphi$ and $ \nrej\agn{:}\varphi$}\\\vspace{.2cm}
$\top$ &{if $ \acc\agn{:}\varphi$ and $ \rej\agn{:}\varphi$ }\\\vspace{.2cm}
$\mathbf{t}$& {if $ \acc\agn{:}\varphi$ and $ \nrej\agn{:}\varphi$}
\end{tabular}
\end{center}
\vspace{-.6cm}
\caption{\emph{Truth-values in terms of cognitive attitudes.}}
\label{tab:atcog}
\end{table}

\begin{table}[ht]
\begin{center}
$\begin{array}{l@{~~~{\rm iff}~~~}l@{~~~{\rm iff}~~~}l}\vspace{.2cm}
\acc\agn{:}\varphi &T\in\agn(\varphi)   &\agn(\varphi)\in\{\top, \ttt\}\\\vspace{.2cm}
\nacc\agn{:}\varphi&T\notin\agn(\varphi)&\agn(\varphi)\in\{\ff, \bot\}\\\vspace{.2cm}
\rej\agn{:}\varphi &F\in\agn(\varphi)   &\agn(\varphi)\in\{\ff, \top\}\\\vspace{.2cm}
\nrej\agn{:}\varphi&F\notin\agn(\varphi)&\agn(\varphi)\in\{\bot, \ttt\}
\end{array}$
\end{center}
\vspace{-.5cm}
\caption{\emph{Cognitive attitudes in terms of truth-values.}}
\label{tab:vdv}
\end{table}

A structured truth-value which contains $T$ may be assigned to an informational content accepted by an agent,  a structured truth-value which contains $F$ may be assigned to an informational content rejected by an agent. A structured truth-value which does not contain $T$ may be assigned to an informational content not-accepted by an agent, and a structured truth-value which does not contain $F$ may be assigned to an informational content not-rejected by an agent. Notice that, in terms of truth-values, the semantics whose primitive objects are cognitive attitudes is in principle non-deterministic, that is, more than one truth-value may be assigned to the same statement.

Given the cognitive attitude that a given agent $\agn$ entertains towards a propositional variable of the logic $\mathbf{E}$, the compound statements are interpreted by the following {recursive clauses}, where $\varphi, \psi \in \mathnormal{S}$,
\begin{enumerate}[label=2.\arabic*]

\item $\acc\agn{:}\neg\varphi,\textrm{~if~}\rej\agn{:}\varphi$ 

\item $\rej\agn{:}\neg\varphi,\textrm{~if~}\acc\agn{:}\varphi$ 

\item $\nacc\agn{:}\neg\varphi,\textrm{~if~}\nrej\agn{:}\varphi$ 

\item $\nrej\agn{:}\neg\varphi,\textrm{~if~}\nacc\agn{:}\varphi$ 

\item $\acc\agn{:}\varphi\wedge\psi,\textrm{~if~}\acc\agn{:}\varphi \textrm{~and~}\acc\agn{:}\psi$

\item $\rej\agn{:}\varphi\wedge\psi,\textrm{~if~}\rej\agn{:}\varphi \textrm{~or~}\rej\agn{:}\psi$

\item $\nacc\agn{:}\varphi\wedge\psi,\textrm{~if~}\nacc\agn{:}\varphi \textrm{~or~}\nacc\agn{:}\psi$

\item $\nrej\agn{:}\varphi\wedge\psi,\textrm{~if~}\nrej\agn{:}\varphi \textrm{~and~}\nrej\agn{:}\psi$

\item $\acc\agn{:}\varphi\vee\psi,\textrm{~if~}\acc\agn{:}\varphi \textrm{~or~}\acc\agn{:}\psi$

\item $\rej\agn{:}\varphi\vee\psi,\textrm{~if~}\rej\agn{:}\varphi \textrm{~and~}\rej\agn{:}\psi$

\item $\nacc\agn{:}\varphi\vee\psi,\textrm{~if~}\nacc\agn{:}\varphi \textrm{~and~}\nacc\agn{:}\psi$

\item $\nrej\agn{:}\varphi\vee\psi,\textrm{~if~}\nrej\agn{:}\varphi \textrm{~or~}\nrej\agn{:}\psi$
\end{enumerate}

Notice that the definition of the cognitive attitude $\acc$ of acceptance in terms of truth-values coincides with the bifilter $\mathcal{F}$ of $FOUR$ (cf.\ Figure~\ref{fig:1}). We thus have that $\mathfrak{M}=\langle\mathbf{4},\acc,\mathcal{O}\rangle$. Hence, just the cognitive attitude $\acc$ and its complement $\mathbf{4}-\acc=\nacc$ determine the notion of validity of $\mathbf{E}$. Let $C\agn{:}\Phi=\{C\agn{:}\varphi| \textrm{ for all } \varphi\in\Phi\}$, where $C\in COG$ and $\Phi\subseteq\mathcal{S}$. The society of agents $SOC$ of $\mathbf{E}$ is the set of agents based on $\mathfrak{M}$. For all $\alpha\in\mathnormal{S}$ and all $\Gamma\subseteq\mathnormal{S}$:

\[\Gamma\vDash^4\alpha\textrm{ iff there is no }\agn\in SOC\textrm{ such that }\acc\agn{:}\Gamma\textrm{ and }\nacc\agn{:}\alpha.\]

\noindent
According to the relation $\vDash^4$, we say that $\alpha$ is a consequence of $\Gamma$ if no agent of the society $SOC$ accepts all of the premises while not-accepting the conclusion.

Considering also the cognitive attitude of rejection, we notice that $\rej$ in terms of truth-values is the (prime) bifilter of the bilattice ${FOUR^-}$, formed by the informational order $\leq_i$ and the inverse of the logical order $\leq_t^-$ (cf. Figure \ref{fig:2}).

\begin{figure}[ht]
\begin{center}
\ifx\du\undefined
  \newlength{\du}
\fi
\setlength{\du}{15\unitlength}
\begin{tikzpicture}
\pgftransformxscale{.85000000}
\pgftransformyscale{-.85000000}
\draw (25.000000\du,5.000000\du)--(27.000000\du,7.000000\du)--(25.000000\du,9.000000\du)--(23.000000\du,7.000000\du)--cycle;
\draw (25.000000\du,5.000000\du)--(27.000000\du,7.000000\du)--(25.000000\du,9.000000\du)--(23.000000\du,7.000000\du)--cycle;
\node[anchor=west] at (27.000000\du,7.000000\du){$\ff$};
\node[anchor=west] at (22.00000\du,7.000000\du){$\ttt$};
\node[anchor=west] at (24.3500000\du,4.500000\du){$\top$};
\node[anchor=west] at (24.3500000\du,9.500000\du){$\bot$};
{
\pgfsetarrowsstart{to}
\draw (22.000000\du,4.000000\du)--(22.000000\du,10.000000\du);
}
{
\pgfsetarrowsend{to}
\draw (22.000000\du,10.000000\du)--(28.000000\du,10.000000\du);
}
\node[anchor=west] at (20.50000\du,4.500000\du){$\leq_i$};
\node[anchor=west] at (27.00000\du,10.60000\du){$\leq_t^-$};
\end{tikzpicture}
\end{center}
	\vspace{-.5cm}\caption{ $\mathit{FOUR^-}$}
	\label{fig:2}
\end{figure}

From $FOUR^-$ we can define the logic $\mathbf{E}^{-}=\langle\mathnormal{S},\vDash^{4-}\rangle$, in which the notion of validity is defined in terms of the cognitive attitudes of rejection $\rej$ and its complement, not-rejection $\nrej$. The society of agents $SOC^-$ of $\mathbf{E}^-$ is the set of agents based on the matrix $\mathfrak{M}^-=\langle\mathbf{4},\rej,\mathcal{O}\rangle$. For all $\alpha\in\mathnormal{S}$ and all $\Gamma\subseteq\mathnormal{S}$, 
\[\Gamma\vDash^{4-}\alpha\textrm{ iff there is no }\agn\in SOC^- \textrm{ such that }\nrej\agn{:}\Gamma\textrm{ and }\rej\agn{:}\alpha. \]

\noindent
According to the relation $\vDash^{4-}$, we say that $\alpha$ is a consequence of $\Gamma$ if no agent of the surveyed society not-rejects all of the premises while rejecting the conclusion.

As a way of securing that the rejected information content should have the same importance as the accepted informational content in a logic based in $FOUR$, we will adopt an alternative semantic matrix created in \cite{Malinowski90a} and generalized in \cite{ShramkoWansing2011}. The so-called symmetric matrix has two sets of designated truth-values which correspond the cognitive attitudes of acceptance and rejection.

Consider the symmetric matrix
 \[\mathfrak{M}^*=\langle\mathbf{4},~\acc,~\rej,~\mathcal{O}\rangle,\] 
 
\noindent such that, $\mathbf{4}=\{\ff, \bot, \top, \ttt\},~\acc=\{\top,\ttt\},~\rej=\{\ff, \top\}$ and $~\mathcal{O}=\{\sqcap_t,  \sqcup_t, -_t\}$. The acceptance and the rejection of informational content by a certain agent is represented in $\mathfrak{M}^*$ respectively by $\acc$ and $\rej$. Notice that $\top\in\acc\cap\rej$ and $\bot\in\mathbf{4}-\acc\cup\rej$ (cf. Figure \ref{fig:mg}), thus, it is possible that the informational content of a statement is at once both accepted and rejected, or at once neither accepted nor rejected, by an agent.

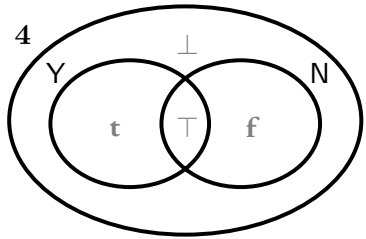
\begin{figure}[h!] 
\begin{center}
\ifx\du\undefined
  \newlength{\du}
\fi
\setlength{\du}{15\unitlength}
\begin{tikzpicture}
\pgftransformxscale{.8000000}
\pgftransformyscale{-.8000000}
\pgfsetlinewidth{0.100000\du}
\pgfpathellipse{\pgfpoint{8.250000\du}{7.900000\du}}{\pgfpoint{5.550000\du}{0\du}}{\pgfpoint{0\du}{3.600000\du}}
\pgfsetlinewidth{0.100000\du}
\pgfpathellipse{\pgfpoint{6.500000\du}{8.000000\du}}{\pgfpoint{2.500000\du}{0\du}}{\pgfpoint{0\du}{2.000000\du}}
\pgfsetlinewidth{0.100000\du}
\pgfpathellipse{\pgfpoint{10.000000\du}{8.000000\du}}{\pgfpoint{2.500000\du}{0\du}}{\pgfpoint{0\du}{2.000000\du}}
\pgfusepath{stroke}
\node[anchor=west] at (5.500000\du,8.150000\du){\textcolor{gray}{$\ttt$}};
\node[anchor=west] at (9.800000\du,8.100000\du){\textcolor{gray}{$\ff$}};
\node[anchor=west] at (2.5000000\du,5.20000\du){$\mathbf{4}$};
\node[anchor=west] at (7.60000\du,5.500000\du){\textcolor{gray}{$\bot$}};
\node[anchor=west] at (7.60000\du,8.100000\du){\textcolor{gray}{$\top$}};
\node[anchor=west] at (3.5000000\du,6.400000\du){$\acc$};
\node[anchor=west] at (11.80000\du,6.400000\du){$\rej$};
\end{tikzpicture}
\end{center}
\vspace{-.5cm}
\caption{\emph{Symmetric matrix $\mathfrak{M}^*$.}}
\label{fig:mg}
\end{figure}

Compared to the matrices $\mathfrak{M}$ and $\mathfrak{M}^-$, the matrix $\mathfrak{M}^*$ may be associated with new definitions of entailment, and therefore, to different multi-valued logics related to $FOUR$, some of which are presented in the next section.

\section{How a computer might think}

Given the matrix $\mathfrak{M}^*=\langle\mathbf{4},~\acc,~\rej,~\mathcal{O}\rangle$, we define a logic which has two Tarskian consequence relations. Such logic follows the proposal of \cite{ShramkoWansing2007,ShramkoWansing2011} called $k$-dimensional Tarskian logic. A $k$-dimensional Tarskian logic has $k$ independent Tarskian consequence relations, where $k{\in}\mathbb{N}$ and $k{>}1$. A consequence relation is said to be independent of the others when it cannot be defined from other consequence relations present in the $k$-dimensional logic.

Accordingly, we define the 2-dimensional logic $\mathbf{E}^{2D}$ associated with $\mathfrak{M}^*$ as:
\[\mathbf{E}^{2D}=\langle\mathnormal{S},\vDash^{t},\vDash^{f}\rangle,\] 

\noindent where $\mathnormal{S}$ is the same language as $\mathbf{E}$, $SOC^*$ is the society of agents based on $\mathfrak{M}^*$ and the consequence relations are defined as $\vDash^{t}\;=\;\vDash^{4}$ and $\vDash^{f}\;=\;\vDash^{4-}$.

The logic $\mathbf{E}^{2D}$ contains a Tarskian consequence relation dealing with the acceptance (or not) of informational content and another Tarskian consequence relation dealing with the rejection (or not) of informational content. This logic seems to be one step closer to our goal, but it is not clear how it connects acceptance and rejection of statements.

The Tarskian consequence relations, however, are not the only definitions of consequence relations that may be associated to the matrix $\mathfrak{M}^*$. Shramko and Wansing \cite{ShramkoWansing2007,ShramkoWansing2011} notice also the possibility of defining other consequence relations for logics associated with the matrices with more than one set of designated values, such as the symmetric matrix:

\begin{quote}
\footnotesize{It appears that the mere multiplication of semantical values not only affords room for defining various entailment relations, but this also allows one to define semantical relations differing in important respects from the familiar notion of semantic consequence (cf. \cite[p.~208]{ShramkoWansing2011}.)}
\end{quote}

One of the definitions of semantic consequence which may be associated with $\mathfrak{M}^*$ is the \emph{$q$-entailment} ---$q$ stands for \emph{quasi}---, proposed by Malinowski together with the $q$-matrix \cite{Malinowski90a}. The reasoning given by \emph{$q$-entailment} is related to reasoning by hypothesis, widely adopted in the empirical sciences. An inference of \emph{$q$-entailment} is valid when the conclusion is \emph{accepted} whenever all the premises are \emph{not-rejected}, and so, in the case where the conclusion is not-accepted, some of the premises ought to be rejected. Accordingly, for all $\Gamma\subseteq\mathnormal{S}$ and $\alpha\in\mathnormal{S}$,
\[\Gamma\vDash^q\alpha \textrm{ iff there is no agent } \agn\in SOC^* \textrm{ such that } \nrej\agn{:}\Gamma \textrm{ and } \nacc\agn{:}\alpha.\]

Among the properties of the Tarskian notion of consequence, Reflexivity is not guaranteed by \emph{$q$-entail\-ment}, because some information may simultaneously be not-accepted and not-rejected by agents of a given society.

The $q$-entailment associated with $\mathfrak{M}^*$ also does not respect the property of Transitivity.%
\footnote{
Originally \emph{$q$-entailment} is a transitive relation, but not reflexive, because it is based on a semantic matrix where the set of accepted values and the set of rejected values are disjoint ($\acc\cap\rej=\emptyset$). 
} %
Indeed, suppose that there is an agent $\agn$ of the society such that  $\nrej\agn{:}\Gamma$, $\rej\agn{:}\Delta$ and $\nacc\agn{:}\alpha$. We have that $\Gamma\vDash^q\delta\textrm{, for all}~\delta\in\Delta$ and $\Delta\vDash^q\alpha$. Since there is an agent $\agn$ which not-rejects the informational content of the propositions in $\Gamma$ and not-accepts the informational content of~$\alpha$, by definition we have that $\Gamma\not\vDash^q\alpha$.

Another non-Tarskian consequence relation that may be defined based in $\mathfrak{M}^*$ is the $p$-\emph{entailment}---$p$ stands for \emph{plausible}--- cf. \cite{Frankowski04a}. The reasoning expressed by \emph{$p$-entailment} allows for a decrease of certainty from the premises to the conclusion. An inference in the form of $p$-{entailment} is valid when, all premises being accepted, the conclusion is not-rejected. Accordingly, for all $\Gamma\subseteq\mathnormal{S}$ and all $\alpha\in\mathnormal{S}$,
\[\Gamma\vDash^p\alpha \textrm{ iff there is no agent } \agn\in SOC^*\textrm{ such that } \acc\agn{:}\Gamma \textrm{ and }\rej\agn{:}\alpha.\]

The $p$-{entailment} associated with $\mathfrak{M}^*$ does not respect the properties of Reflevixity and Transitivity. Reflexivity fails considering a situation in which $\alpha$ is accepted and rejected by an agent of a given society.
\footnote{Originally $p$-{entailment} is reflexive, but not transitive, because it is based in a semantic matrix whose truth-values belong either to the set of accepted values or to the set of rejected values ($\acc\cup\rej=\mathcal{V}$).
}
Assume that $\Gamma\vDash^p\delta\textrm{ for all}~\delta\in\Delta$ and assume as well that $\Delta\vDash^p\alpha$. Suppose that there is an agent $\agn\in SOC^*$ such that $\acc\agn{:}\Gamma$, $\nacc\agn{:}\Delta$ and $\rej\agn{:}\alpha$. Then, $\Gamma\not\vDash^p\alpha$, since $\agn$ accepts the informational content of the propositions in $\Gamma$ and rejects the informational content of the proposition~$\alpha$.



Considering the semantic consequence relations of $q$- and $p$-{entailment}, we define the logic $\mathbf{E}^4$ based on $\mathfrak{M}^*$ with four entailment relations, following \cite{ShramkoWansing2007,ShramkoWansing2011}:
\[\mathbf{E}^4=\langle\mathnormal{S},\vDash^t,\vDash^f,\vDash^q,\vDash^p\rangle.\]

According to Shramko and Wansing (\cite{ShramkoWansing2007}, p. 140), these four consequence relations may be represented in a bilattice such as the truth-values of $\mathbf{4}$. 
This structure of entailment relations lead the authors to consider, without going into much detail, that each of the consequence relations seems to represent a truth-value of~$\mathbf{4}$. 
Such idea, however, is deemed problematic by the authors themselves, once the four consequence relations are not independent and can be reduced to only two, according to the following result:

\begin{prop}[\cite{ShramkoWansing2007}, pp. 137-8] Given the symmetric matrix $\mathfrak{M}^*$, if we introduce in its structure the four entailment relations previously defined, we have that $\vDash^t\;=\;\vDash^f$ and $\vDash^q\;=\;\vDash^p$. Moreover, $\vDash^t$ (or $\vDash^f$) $\;\neq\;$ $\vDash^p$ (or  $\vDash^q$).
\end{prop}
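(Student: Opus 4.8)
The plan is to collapse the four relations into two by exploiting a symmetry of $FOUR$, and then to separate the two survivors by means of the reflexive inference $p\vDash p$. The symmetry I have in mind is the map $g:\mathbf{4}\to\mathbf{4}$ that interchanges $\top$ and $\bot$ while fixing $\ff$ and $\ttt$ (concretely, the involution that reverses $\leq_i$ while preserving $\leq_t$). A routine inspection of the truth-tables of $\sqcap_t$, $\sqcup_t$ and $-_t$ shows that $g$ is an automorphism of the algebra $\langle\mathbf{4},\sqcap_t,\sqcup_t,-_t\rangle$: it preserves $\leq_t$, hence the meet and join of that order, and it commutes with $-_t$. In particular $g$ fixes the operations of $\mathcal{O}$, so whenever $\agn\in SOC^*$ the composite $g\circ\agn$ is again a homomorphism $\mathnormal{S}\to\mathbf{4}$, i.e.\ again an agent of $SOC^*$; and since $g$ is an involution, the assignment $\agn\mapsto g\circ\agn$ is a bijection of $SOC^*$ onto itself.

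Next I would record how $g$ permutes the four (anti)designated sets. Reading off images gives $g[\acc]=\{g(\top),g(\ttt)\}=\{\bot,\ttt\}=\nrej$, $g[\nacc]=\{\ff,\top\}=\rej$, and symmetrically $g[\nrej]=\acc$ and $g[\rej]=\nacc$. Because $g$ is a bijection, for every $\varphi$ we obtain $\acc\agn{:}\varphi$ iff $\nrej(g\circ\agn){:}\varphi$, $\nacc\agn{:}\varphi$ iff $\rej(g\circ\agn){:}\varphi$, and $\nrej\agn{:}\varphi$ iff $\acc(g\circ\agn){:}\varphi$. Now unfolding the definitions: $\Gamma\not\vDash^t\alpha$ means some $\agn\in SOC^*$ satisfies $\acc\agn{:}\Gamma$ and $\nacc\agn{:}\alpha$; transporting by the bijection, this holds iff some agent $g\circ\agn$ satisfies $\nrej(g\circ\agn){:}\Gamma$ and $\rej(g\circ\agn){:}\alpha$, i.e.\ iff $\Gamma\not\vDash^f\alpha$. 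Hence $\vDash^t\;=\;\vDash^f$. The identical computation carrying the pair $(\nrej,\nacc)$ to $(\acc,\rej)$ yields $\Gamma\not\vDash^q\alpha$ iff $\Gamma\not\vDash^p\alpha$, so $\vDash^q\;=\;\vDash^p$.

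For the separation it suffices to exhibit one inference distinguishing $\vDash^t$ from $\vDash^p$, and reflexivity does the job. Taking $\Gamma=\{p\}$ and $\alpha=p$, validity of $p\vDash^t p$ requires that no agent have $\acc\agn{:}p$ together with $\nacc\agn{:}p$, which is automatic since $\acc\cap\nacc=\emptyset$; thus $p\vDash^t p$. On the other hand $p\vDash^p p$ would require that no agent have $\acc\agn{:}p$ together with $\rej\agn{:}p$, but the agent sending $p$ to $\top$ does exactly this, because $\top\in\acc\cap\rej$; thus $p\not\vDash^p p$. Therefore $\vDash^t\neq\vDash^p$, and by the two equalities just established the remaining inequalities between $\{\vDash^t,\vDash^f\}$ and $\{\vDash^p,\vDash^q\}$ follow at once.

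The only step that is not a pure unfolding of definitions is verifying that $g$ is an algebra automorphism, so that $g\circ\agn$ is a legitimate member of $SOC^*$; this is precisely where the symmetry of $FOUR$ is used, and I expect it to be the main (though entirely routine) obstacle. Everything else reduces to the two elementary incidence facts already highlighted in the discussion of reflexivity for the $p$- and $q$-entailments, namely $\acc\cap\nacc=\emptyset$ and $\top\in\acc\cap\rej$.
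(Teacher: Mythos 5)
Your proof is correct and takes essentially the same approach as the paper: your map $\agn\mapsto g\circ\agn$ is exactly the paper's dual agent $\agn^*$ (which swaps $\top$ and $\bot$ and thereby interchanges $\acc\leftrightarrow\nrej$ and $\nacc\leftrightarrow\rej$), and your separation via $p\vDash^t p$ versus $p\not\vDash^p p$ witnessed by $\agn(p)=\top\in\acc\cap\rej$ is the paper's appeal to the failure of reflexivity for $\vDash^p$/$\vDash^q$. Your explicit check that $g$ is an automorphism of $\langle\mathbf{4},\sqcap_t,\sqcup_t,-_t\rangle$, so that $g\circ\agn\in SOC^*$, merely makes precise a step the paper leaves implicit in defining $\agn^*$.
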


\begin{proof}
The fact that $\vDash^t\;=\;\vDash^f$ is proved in \cite[Proposition 4]{Dunn2000}. In terms of a society of agents, for each agent $\agn$, an agent $\agn^*$ is defined such that  $\nrej\agn^*{:}\varphi$ iff $\acc\agn{:}\varphi$; $\acc\agn^*{:}\varphi$ iff $\nrej\agn{:}\varphi$; $\nacc\agn^*{:}\varphi$ iff $\rej\agn{:}\varphi$; and $\rej\agn^*{:}\varphi$ iff $\nacc\agn{:}\varphi$. Assume that $\Gamma\vDash^t\psi$ and consider an agent $\agn$ such that $\rej\agn{:}\psi$. Then, $\nacc\agn^*{:}\psi$. This way, 
we obtain $\nacc\agn^*{:}\Gamma$, and thus, $\rej\agn{:}\Gamma$, therefore $\Gamma\vDash^f\psi$. The proof of the converse result is analogous.

The fact that $\vDash^q\;=\;\vDash^p$ is proved in \cite[Proposition~1]{ShramkoWansing2007} and is verified in the following way:

\noindent
[$\vDash^p\subseteq\vDash^q$] Suppose that $\Delta\not\vDash^q\alpha$. By definition, there is an agent $\agn$ such that $\nacc\agn{:}\Delta$ and $\nrej\agn{:}\alpha$. We want to show that $\Delta\not\vDash^p\alpha$, that is, that there is an agent $\agn'$ such that $\acc\agn'{:}\Delta$ and $\rej\agn'{:}\alpha$. Take $\agn'=\agn^*$. Notice that $\acc\agn^*{:}\Delta$ and $\rej\agn^*{:}\alpha$. Therefore, $\Delta\not\vDash^p\alpha$.

\noindent
[$\vDash^q\subseteq\vDash^p$] Suppose that $\Delta\not\vDash^p\alpha$. By definition, there is an agent $\agn$ such that $\acc\agn{:}\Delta$ and $\rej\agn{:}\alpha$. We want to show that $\Delta\not\vDash^q\alpha$, that is, that there is an agent $\agn'$, such that $\nacc\agn'{:}\Delta$ and $\nrej\agn'{:}\alpha$. Take $\agn'=\agn^*$. Notice that $\nrej\agn^*{:}\Delta$ and $\nacc\agn^*{:}\alpha$. Therefore, $\Delta\not\vDash^q\alpha$.

The definition of $\vDash^q$ (or, equally, of $\vDash^p$) does not coincide with $\vDash^t$ (or, equally, with $\vDash^f$) when they are associated with $\mathfrak{M}^*$, because the former is not Tarskian, being neither reflexive nor symmetric, and the latter is Tarskian.
\end{proof}


A logic which intends to deal with informational content even in the presence of inconsistency or partiality should not focus only on statements ``said to be true'' in detriment of statements ``said to be false'' as determinants of the notion of validity of an inference. Beyond the reasoning based on acceptance, different kinds of reasoning can emerge from accepted and rejected statements, such as the hypothetical reasoning expressed by {$q$-entailment}, the pragmatic reasoning of {$p$-entailment} and the reasoning based on the preservation of that which is not rejected.

The first step taken in order to define a logic which deals with informational content is 
to change some definitions concerning the notion of consequence. 
Instead of epistemic situations, which coincide with the truth-values, we give way to the cognitive attitudes of acceptance and rejection of a given informational content by a set of agents as primitive objects that define inferences. Following this path, we adopt the symmetric matrix which allows for the expression of logical reasoning including those which give origin to the definition of non-Tarskian consequence relations.

The matrix $\mathfrak{M}^*$ allows for the definition of entailment relations not only in terms of preservation of acceptance and rejection of the informational content---characteristic of the Tarskian consequence relations---but also in terms of the interaction among the cognitive attitudes of acceptance and rejection of the informational content of statements.

We present, in the following, a logic whose associated semantic consequence relation, `{$B$-entailment}', can effectively express the diverse types of reasoning involving accepted and rejected informational content. This consequence relation generalizes in one single structure the consequence relations of $\vDash_t$, $\vDash_f$, $\vDash_q$ and $\vDash_p$.

\section{The logic $\mathbf{E}^B$}

Given a semantics based on a society of agents which entertain the cognitive attitudes of acceptance or rejection towards a given informational content of consulted statements, we will adopt a semantic consequence relation, called \emph{$B$-entailment}, to define the logic $\mathbf{E}^B$. The definition of {$B$-entailment} is able to cover all aspects related to acceptance and rejection, including, in particular, reasoning expressed by the previously defined four notions of consequence, to such an extent that it does not confound $\vDash^t$ with $\vDash^f$ and $\vDash^q$ with $\vDash^p$. We call $B$-logic the logic associated with a {$B$-entailment}. 

Let $\mathbf{E}^B$ be the $B$-logic
\[\mathbf{E}^B=\langle\mathcal{S},\Bent{\textcolor{white}{\cdot}}{}{\textcolor{white}{\cdot}}{}\rangle\]

\noindent where $\mathcal{S}$ is the same language of $\mathbf{E}$ and  $\Bent{\textcolor{white}{\cdot}}{}{\textcolor{white}{\cdot}}{}\subseteq\wp(\mathcal{S})^4$ is a {$B$-{entailment}}, defined as follows.  Given a set of agents $SOC^*$ based on the matrix $\mathfrak{M}^*$, for all $\Gamma, \Delta, \Phi, \Psi\subseteq\mathcal{S}$, 

\[\Bent{\Gamma}{\Psi}{\Phi}{\Delta}\textrm{~iff~there is no }\agn\in SOC^* \textrm{ such that }\acc s{:}\Gamma\textrm{~and~}\nacc s{:}\Delta\textrm{~and~}\rej s{:}\Phi\textrm{~and~}\nrej s{:}\Psi.\]

\noindent that is, an inference $\Bent{\Gamma}{\Psi}{\Phi}{\Delta}$ of $\mathbf{E}^B$ is valid if there is no agent $s\in SOC^*$ such that $s$ accepts all sentences of $\Gamma$, not-accepts all the sentences of $\Delta$, rejects all sentences of $\Phi$ and not-rejects all the sentences of $\Psi$.
In case 
$\textrm{there is an } \agn\in SOC^* \textrm{ such that }\acc \agn{:}\Gamma\textrm{~and~}\nacc \agn{:}\Delta\textrm{~and~}\rej \agn{:}\Phi\textrm{~and~}\nrej \agn{:}\Psi$ 
we say we are dealing with an invalid inference in the form of $B$-{entailment}, and denote it by writing
$\nBent{\Gamma}{\Psi}{\Phi}{\Delta}$.

$B$-{entailment} can express different types of reasoning related to acceptance and to rejection such as the ones given by the consequence relations $\vDash_t$, $\vDash_f$, $\vDash_q$ and $\vDash_p$, as shown by the Table \ref{tab:tfpqB} below. 

\begin{table}[ht]
\begin{center}
\small
$\begin{array}{c@{\textrm{ ~iff~ }}c@{\textrm{ ~iff~ }}c}\vspace{.2cm}

\Gamma\vDash^t\delta&\textrm{ there is no agent }\agn\textrm{ such that }\acc_{\agn}{:}\Gamma\textrm{ and }\nacc_{\agn}{:}\delta&\Bent{\Gamma}{}{}{\delta}\\\vspace{.2cm}

\Psi\vDash^f\varphi&\textrm{ there is no agent }\agn\textrm{ such that }\nrej_{\agn}{:}\Psi\textrm{ and }\rej_{\agn}{:}\varphi&\Bent{}{\Psi}{\varphi}{}\\\vspace{.2cm}

\Psi\vDash^q\delta&\textrm{ there is no agent }\agn\textrm{ such that }\nrej_{\agn}{:}\Psi\textrm{ and }\nacc_{\agn}{:}\delta&\Bent{}{\Psi}{}{\delta}\\\vspace{.2cm}

\Gamma\vDash^p\varphi&\textrm{ there is no agent }\agn\textrm{ such that }\acc_{\agn}{:}\Gamma\textrm{ and }\rej_{\agn}{:}\varphi&\Bent{\Gamma}{}{\varphi}{}

\end{array}$
\normalsize
\end{center}
\vspace{-.4cm}
 \caption{Some types of reasoning expressed by using {$B$-entailment}, where $\Gamma,\Psi\subseteq\mathnormal{S}$ and $\varphi,\delta\in\mathnormal{S}$.}
 \label{tab:tfpqB}
 \end{table}

Even though $B$-{entailment}\footnote{\cite{Bochman98} defines a consequence relation similar to $B$-{entailment}, called \emph{biconsequence}.} is not a Tarskian consequence relation, its two dimensions allow us to observe properties related to Reflexivity, Monotonicity and Transitivity.
 
\begin{prop}
For all $\alpha\in\mathcal{S}$ and $\Gamma,\Delta, \Phi, \Psi$, $\Gamma',\Delta', \Phi', \Psi'~{\subseteq}~\mathcal{S}$, {$B$-entailment} respects the following properties: 

 \begin{description}
\item[\textbf{$B$-reflexivities}]\textcolor{white}{ }

\begin{itemize}

\item[$(t)$] $\Bent{\alpha}{}{}{\alpha}$ 

\item[$(f)$] $\Bent{}{\alpha}{\alpha}{}$ 
\end{itemize}

\item[\textbf{$B$-monotonicity}]\textcolor{white}{ } 

\begin{itemize}
\item[ ] If $\Bent{\Gamma}{\Psi}{\Phi}{\Delta}$, then $\Bent{\Gamma,\Gamma'}{\Psi,\Psi'}{\Phi,\Phi'}{\Delta,\Delta'}$ 
\end{itemize}

\item[\textbf{$B$-transitivities}]\textcolor{white}{ }

\begin{itemize}
\item[$(t)$] If $\Bent{\alpha,\Gamma}{\Psi}{\Phi}{\Delta}$ and $\Bent{\Gamma}{\Psi}{\Phi}{\Delta,\alpha}$, then $\Bent{\Gamma}{\Psi}{\Phi}{\Delta}$

\item[$(f)$]  If $\Bent{\Gamma}{\alpha,\Psi}{\Phi}{\Delta}$ and $\Bent{\Gamma}{\Psi}{\Phi,\alpha}{\Delta}$, then $\Bent{\Gamma}{\Psi}{\Phi}{\Delta}$
\end{itemize}
\end{description}
\end{prop}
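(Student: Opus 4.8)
The plan is to reduce every clause to the \emph{negative} reading of $B$-entailment: $\Bent{\Gamma}{\Psi}{\Phi}{\Delta}$ holds exactly when \emph{no} agent $\agn\in SOC^*$ simultaneously satisfies $\acc\agn{:}\Gamma$, $\nrej\agn{:}\Psi$, $\rej\agn{:}\Phi$ and $\nacc\agn{:}\Delta$. Each property then becomes a claim about the non-existence of a counterexample agent, which I would establish by contraposition: assuming an agent witnessing the failure of the conclusion, I would exhibit an agent witnessing the failure of one of the hypotheses. The single fact doing all the work is the \emph{complementarity} recorded in Table~\ref{tab:vdv}: since each agent assigns to $\alpha$ one value of $\mathbf{4}$, and $\{\top,\ttt\}$, $\{\ff,\bot\}$ partition $\mathbf{4}$ while $\{\ff,\top\}$, $\{\bot,\ttt\}$ also partition $\mathbf{4}$, exactly one of $\acc\agn{:}\alpha$, $\nacc\agn{:}\alpha$ holds, and exactly one of $\rej\agn{:}\alpha$, $\nrej\agn{:}\alpha$ holds.

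For the two $B$-reflexivities there is nothing to assume. Clause $(t)$, $\Bent{\alpha}{}{}{\alpha}$, asks that no agent both accepts and not-accepts $\alpha$, which is immediate from $\acc/\nacc$ complementarity; clause $(f)$, $\Bent{}{\alpha}{\alpha}{}$, asks that no agent both rejects and not-rejects $\alpha$, immediate from $\rej/\nrej$ complementarity. $B$-monotonicity is equally direct: if some agent witnesses the failure of $\Bent{\Gamma,\Gamma'}{\Psi,\Psi'}{\Phi,\Phi'}{\Delta,\Delta'}$, then because $\Gamma\subseteq\Gamma\cup\Gamma'$ (and likewise for the other three components) that very agent already witnesses the failure of $\Bent{\Gamma}{\Psi}{\Phi}{\Delta}$; contraposing gives the implication.

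The $B$-transitivities are the only clauses needing a case split, and this is where I would concentrate attention. For $(t)$, suppose $\Bent{\Gamma}{\Psi}{\Phi}{\Delta}$ fails, witnessed by an agent $\agn$ with $\acc\agn{:}\Gamma$, $\nrej\agn{:}\Psi$, $\rej\agn{:}\Phi$, $\nacc\agn{:}\Delta$. By complementarity $\agn$ places $\alpha$ on one side of the accept/not-accept divide. If $\acc\agn{:}\alpha$, then $\agn$ accepts all of $\{\alpha\}\cup\Gamma$ and so witnesses $\nBent{\alpha,\Gamma}{\Psi}{\Phi}{\Delta}$, contradicting the first hypothesis; if $\nacc\agn{:}\alpha$, then $\agn$ not-accepts all of $\Delta\cup\{\alpha\}$ and witnesses $\nBent{\Gamma}{\Psi}{\Phi}{\Delta,\alpha}$, contradicting the second. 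Clause $(f)$ is the exact mirror image: the same witnessing agent is split according to $\rej/\nrej$ complementarity for $\alpha$, routing the $\nrej$ case into a failure of $\Bent{\Gamma}{\alpha,\Psi}{\Phi}{\Delta}$ and the $\rej$ case into a failure of $\Bent{\Gamma}{\Psi}{\Phi,\alpha}{\Delta}$.

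The only thing to be careful about is bookkeeping: each transitivity migrates the cut formula $\alpha$ within a single dimension --- the accept/not-accept dimension for $(t)$, the reject/not-reject dimension for $(f)$ --- so one must pair the correct complementary attitudes with the correct pair of hypotheses. I do not expect a genuine obstacle here; once the negative formulation and the two partitions of $\mathbf{4}$ are in place, all four clauses reduce to short contrapositive arguments.
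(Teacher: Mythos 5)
Your proposal is correct and takes essentially the same route as the paper's own proof: both reduce every clause to the non-existence of a witnessing agent and rest on the fact that $\acc/\nacc$ and $\rej/\nrej$ partition $\mathbf{4}$ (the paper writes this as $\acc\cap\nacc=\emptyset$ and $\rej\cap\nrej=\emptyset$ in the reflexivity cases), with monotonicity handled by the identical contrapositive observation that a counterexample agent for the enlarged sequent already refutes the smaller one. Your only departure is presentational: you make explicit, via a contrapositive case split on the cut formula $\alpha$, the complementarity step that the paper's terse direct argument for the two $B$-transitivities leaves implicit.
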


\begin{proof}
$B$-reflexivity ($t$). Suppose that $\nBent{\alpha}{}{}{\alpha}$. Then there is an agent $\agn \in SOC^*$ such that $\acc\agn{:}\alpha$ and $\nacc\agn{:}\alpha$. This means that $\agn(\alpha)\in\acc\cap\nacc=\emptyset$, which is absurd.

$B$-reflexivity ($f$). Suppose that$\nBent{}{\alpha}{\alpha}{}$. Then there is an agent $\agn \in SOC^*$ such that $\nrej\agn{:}\alpha$ and $\rej\agn{:}\alpha$. This means that $\agn(\alpha)\in\nrej\cap\rej=\emptyset$, which is absurd.

$B$-monotonicity. Suppose, by contraposition, that $\nBent{\Gamma',\Gamma''}{\Psi',\Psi''}{\Phi',\Phi''}{\Delta',\Delta''}$ . Then, there is an $\agn \in SOC^*$ such that $\acc\agn{:}\Gamma'$, $\acc\agn{:}\Gamma''$, $\nacc\agn{:}\Delta'$, $\nacc\agn{:}\Delta''$, $\rej\agn{:}\Phi'$, $\rej\agn{:}\Phi''$, $\nrej\agn{:}\Psi'$ and $\nrej\agn{:}\Psi''$. Notice that there is no $\agn \in SOC^*$ such that $\acc\agn{:}\Gamma'$, $\nacc\agn{:}\Delta'$, $\rej\agn{:}\Phi'$ and $\nrej\agn{:}\Psi'$. Therefore, $\nBent{\Gamma'}{\Psi'}{\Phi'}{\Delta'}$.

$B$-transitivity ($t$). 
Let $\Bent{\alpha,\Gamma}{\Psi}{\Phi}{\Delta}$ and $\Bent{\Gamma}{\Psi}{\Phi}{\Delta,\alpha}$. Then, there is no agent $\agn\in SOC^*$ such that $\acc\agn{:}\Gamma$, $\acc\agn{:}\alpha$, $\nacc\agn{:}\Delta$, $\rej\agn{:}\Phi$, $\nrej\agn{:}\Psi$. There is also no agent $\agn\in SOC^*$ such that $\acc\agn{:}\Gamma$, $\nacc\agn{:}\Delta$, $\nacc\agn{:}\alpha$, $\rej\agn{:}\Phi$, $\nrej\agn{:}\Psi$.
Therefore, we have that there is no $\agn\in SOC^*$, such that $\acc\agn{:}\Gamma$, $\nacc\agn{:}\Delta$, $\rej\agn{:}\Phi$, $\nrej\agn{:}\Psi$, and so $\Bent{\Gamma}{\Psi}{\Phi}{\Delta}$.

$B$-transitivity ($f$).  The proof is similar to that of $B$-transitivity ($t$).
\end{proof}

The two aspects of Reflexivity valid in $\mathbf{E}^B$ correspond to the forms of reasoning of the Tarskian consequence relations $\vDash^t$ and $\vDash^f$.
By contrast, there are other forms of reasoning, which express other aspects of Reflexivity, that are not valid in {$B$-entailment}, such as $\Bent{\alpha}{}{\alpha}{}$ and $\Bent{}{\alpha}{}{\alpha}$.

\begin{prop} \begin{enumerate} \item$\nBent{\alpha}{}{\alpha}{}$  
\item$\nBent{}{\alpha}{}{\alpha}$
\end{enumerate} \label{prop:refpq}
\end{prop}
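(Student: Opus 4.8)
The plan is to refute each of the two inferences by exhibiting a single witnessing agent, exploiting the fact that the pair of attitudes involved in each case is \emph{jointly} satisfiable---in contrast with the complementary pairs $\acc/\nacc$ and $\rej/\nrej$ that made the $B$-reflexivities provable. Concretely, what obstructs reflexivity here is that $\acc\cap\rej=\{\top\}$ and $\nacc\cap\nrej=\{\bot\}$ are both nonempty, so there is ``room'' for a counter-agent.

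For item~1, unwinding the definition of invalidity, $\nBent{\alpha}{}{\alpha}{}$ holds precisely when there is an agent $\agn\in SOC^*$ with $\acc\agn{:}\alpha$ and $\rej\agn{:}\alpha$. By Table~\ref{tab:vdv} these two requirements amount to $\agn(\alpha)\in\{\top,\ttt\}\cap\{\ff,\top\}=\{\top\}$, so it suffices to produce one agent assigning $\top$ to $\alpha$. I would take the constant agent $\ag{\top}$ given by $\ag{\top}(p)=\top$ on every propositional variable $p$. Since agents of $SOC^*$ are homomorphisms with respect to $\mathcal{O}=\{\sqcap_t,\sqcup_t,-_t\}$, and since $-_t\top=\top$ together with $\top\sqcap_t\top=\top\sqcup_t\top=\top$, a routine induction on the complexity of $\alpha$ gives $\ag{\top}(\alpha)=\top$ for every $\alpha\in\mathcal{S}$. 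Hence $\ag{\top}$ is the desired witness.

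Item~2 is handled symmetrically. Here $\nBent{}{\alpha}{}{\alpha}$ holds exactly when some agent satisfies $\nacc\agn{:}\alpha$ and $\nrej\agn{:}\alpha$, which by Table~\ref{tab:vdv} means $\agn(\alpha)\in\{\ff,\bot\}\cap\{\bot,\ttt\}=\{\bot\}$. I would use the constant agent $\ag{\bot}$ with $\ag{\bot}(p)=\bot$ for all variables $p$; because $-_t\bot=\bot$ and $\bot\sqcap_t\bot=\bot\sqcup_t\bot=\bot$, the same induction yields $\ag{\bot}(\alpha)=\bot$, providing the required witness. The only step that is not entirely immediate is the inductive one: since $\alpha$ may be compound, one must verify that the constant values $\top$ and $\bot$ survive each connective, which is exactly the content of the fixed-point and idempotency identities recorded above; everything else follows directly from the matrix definitions.
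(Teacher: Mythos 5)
Your proof is correct and follows the same basic strategy as the paper's: refute item 1 with an agent giving $\alpha$ the value $\top$ and item 2 with an agent giving it $\bot$. Two points of comparison are worth recording. First, you are more careful than the paper on one step: the paper simply stipulates ``Let $s(\alpha)=\top$'' (resp.\ ``Let $s(\alpha)=\bot$''), which is legitimate only once one knows that such an agent exists for an \emph{arbitrary}, possibly compound, formula $\alpha$; your constant agents $\ag{\top}$ and $\ag{\bot}$, together with the observation that $\{\top\}$ and $\{\bot\}$ are closed under $-_t$, $\sqcap_t$ and $\sqcup_t$ (so that the induction on the complexity of $\alpha$ goes through), supply exactly the justification the paper leaves implicit. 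Second, note that the paper's printed proof interchanges the attitude pairs between the two items: it asserts that $s(\alpha)=\top$ yields $\nacc\agn{:}\alpha$ and $\nrej\agn{:}\alpha$, and that $s(\alpha)=\bot$ yields $\acc\agn{:}\alpha$ and $\rej\agn{:}\alpha$, whereas, as you correctly read off from Table~\ref{tab:vdv}, the value $\top$ gives $\acc\agn{:}\alpha$ and $\rej\agn{:}\alpha$ --- which is what the invalidity $\nBent{\alpha}{}{\alpha}{}$ requires --- and the value $\bot$ gives $\nacc\agn{:}\alpha$ and $\nrej\agn{:}\alpha$, as required for $\nBent{}{\alpha}{}{\alpha}$. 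So your version is, in effect, the corrected and fully detailed form of the paper's own argument, and your opening remark that the failure of these reflexivities is due precisely to $\acc\cap\rej=\{\top\}$ and $\nacc\cap\nrej=\{\bot\}$ being nonempty matches the paper's surrounding discussion of the semantics ``losing'' $\top$ or $\bot$ when these forms are imposed.
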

\begin{proof}
\textit{1.} Let $s(\alpha)=\top$. Then, there is $\agn\in SOC^*$ such that $\nacc\agn{:}\alpha$ and $\nrej\agn{:}\alpha$. Thus, $\nBent{\alpha}{}{\alpha}{}$.

\textit{2.} Let $s(\alpha)=\bot$. Then, there is $\agn\in SOC^*$ such that $\acc\agn{:}\alpha$ and $\rej\agn{:}\alpha$. Thus, $\nBent{}{\alpha}{}{\alpha}$.
\end{proof}
 
 The non-valid aspects of Reflexivity coincide with the forms of $\vDash^q$ and of ~$\vDash^p$ in which Reflexivity is also not valid when associated with the matrix  $\mathfrak{B}^{\mathbf{E}}$. In addition, when $\Bent{\alpha}{}{\alpha}{}$ is valid, the semantics ``loses'' the truth-value $\top$ and when $\Bent{}{\alpha}{}{\alpha}$ is valid, the semantics ``loses'' the truth-value $\bot$. 
 
\newpage
 Let us now examine some examples of valid and invalid inferences of $\mathbf{E}$, comparing them to $\mathbf{E}^B$.

\begin{description}
\item[\textbf{Conjunction introduction:}] $\alpha,~\beta\vDash^4\alpha\wedge\beta$

In this case, there are eight ways of expressing the introduction of conjunction using {$B$-entailment}, but only $\Bent{\alpha,~\beta}{}{}{\alpha\wedge\beta}$ and $\Bent{}{\alpha,~\beta}{\alpha\wedge\beta}{}$ are valid. These inferences express, respectively, the reasoning of the standard consequence relations $\vDash^t$ and $\vDash^f$ (cf. Table \ref{tab:tfpqB}). 

The following forms, for example, are not valid in $\mathbf{E}^B$:

 $\Bent{}{\alpha,~\beta}{}{\alpha\wedge\beta}$. Consider an agent $\agn' \in SOC^*$ such that $\nacc\agn{:}\alpha$, $\nrej\agn'{:}\alpha$ and $\nrej\agn'{:}\beta$ (that is, $\agn'(\alpha)=\top$ and $\agn'(\beta)\in\{\top,\ttt\}$). The inference is invalid since $\nrej\agn'{:}\alpha$ and $\nrej\agn'{:}\beta$ and $\nacc\agn'{:}\alpha\wedge\beta$ (by the recursive clause 2.7).

$\Bent{\beta}{\alpha}{}{\alpha\wedge\beta}$. Consider an agent $\agn''\in SOC^*$ such that $\nacc\agn''{:}\alpha$ and $\nrej\agn''{:}\alpha$ (that is, $\agn''(\alpha)=\bot$) and $\acc\agn''{:}\beta$ (that is $\agn''(\beta)\in\{\top, \ttt\}$). The inference is invalid since $\acc\agn''{:}\beta$, $\nrej\agn''{:}\alpha$ and $\nacc\agn''{:}\alpha\wedge\beta$ (by clause 2.7).

Countermodels for $\nBent{\alpha,~\beta}{}{\alpha\wedge\beta}{}$,  $\nBent{\beta}{\alpha}{\alpha\wedge\beta}{}$,  $\nBent{\alpha}{\beta}{\alpha\wedge\beta}{}$ and  $\nBent{\alpha}{\beta}{}{\alpha\wedge\beta}$ may be described in a similar way.

\item[\textbf{Principle of explosion:}] $\alpha\wedge\neg\alpha\not\vDash^4\beta$

The principle of explosion is always invalid in any aspect of the $B$-{entailment}:
 $\nBent{\alpha\wedge\neg\alpha}{}{}{\beta}$, $\nBent{}{\alpha\wedge\neg\alpha}{\beta}{}$,$\nBent{}{\alpha\wedge\neg\alpha}{}{\beta}$ e $\nBent{\alpha\wedge\neg\alpha}{}{\beta}{}$. 
We will show a countermodel for 
$\Bent{\alpha\wedge\neg\alpha}{}{}{\beta}$. Consider an agent $\agn$, such that $\acc\agn{:}\alpha$, $\rej\agn{:}\alpha$ (that is, $\agn(\alpha)=\top$) and $\nacc\agn{:}\beta$. By clause 2.1, $\acc\agn{:}\alpha$ and $\acc\agn{:}\neg\alpha$ and by clause 2.5, $\acc\agn{:}\alpha\wedge\neg\alpha$. Thus, there is an agent $\agn$, such that $\acc\agn{:}\alpha\wedge\neg\alpha$ and $\nacc\agn{:}\beta$. Therefore, $\nBent{\alpha\wedge\neg\alpha}{}{}{\beta}$.
Notice that the principle of explosion being invalid has to do with the fact that $\mathbf{E}^B$ is a relevant logic. 

\item[\textbf{Principle of the excluded middle:}] $\beta\not\vDash^4\neg\alpha\vee\alpha$

The principle of the excluded middle is also always invalid in any aspect of the $B$-{entailment}: $\nBent{\beta}{}{}{\neg\alpha\vee\alpha}$, $\nBent{}{\beta}{\neg\alpha\vee\alpha}{}$, $\nBent{}{\beta}{}{\neg\alpha\vee\alpha}$ e $\nBent{\beta}{}{\neg\alpha\vee\alpha}{}$. 
We will show a countermodel for 
$\Bent{}{\beta}{}{\neg\alpha\vee\alpha}$. Consider an agent $\agn$, such that $\nacc\agn{:}\alpha$ and $\nrej\agn{:}\alpha$ (that is, $\agn(\alpha)=\bot$). By clause 2.3, $\nacc\agn{:}\alpha$ and $\nacc\agn{:}\neg\alpha$, and by clause 2.6 $\nacc\agn{:}\neg\alpha\vee\alpha$. Thus, there is an agent $\agn$, such that $\nrej\agn{:}\beta$ and $\nacc\agn{:}\neg\alpha\vee\alpha$. Therefore, $\nBent{}{\beta}{}{\neg\alpha\vee\alpha}$.
Notice that the principle of the excluded middle being invalid has to do with the fact that $\mathbf{E}^B$ is a relevant logic. 

\end{description}  

In the following, we propose a sequent calculus and prove the characterization results of $\mathbf{E}^B$. 

\paragraph{\textbf{Sequent calculus for $\mathbf{E}^B$}}\textcolor{white}{.}

We call $B$-sequents the expressions of the form $\Bcon{\Gamma}{\Psi}{\Phi}{\Delta}$ in which $\Bcon{\textcolor{white}{\cdot}}{}{\textcolor{white}{\cdot}}{}$ is a sequent symbol with four positions and $\Gamma, \Psi, \Delta, \Phi \subseteq \mathcal{S}$ are finite sets of formulas of the language. Let $\Lambda=\{\lambda_1, ..., \lambda_n\}, n\in\mathbb{N}$ and let $\bigvee C\agn{:}\Lambda$ be a shorthand for ``the agent $\agn$ entertains the cognitive attitude $C$ towards $\lambda_1$ or ... or to $\lambda_n$''.
For all $\agn\in SOC^*$, the meaning of the $B$-sequent $\Bcon{\Gamma}{\Psi}{\Phi}{\Delta}$ is given by:
\[{\bigvee}\nacc\agn{:}\Gamma\textrm{ or }{\bigvee}\acc\agn{:}\Delta\textrm{ or }{\bigvee}\nrej\agn{:}\Phi\textrm{ or }{\bigvee}\rej\agn{:}\Psi.\]

For 
any formulas $\alpha,\beta \in \mathcal{S}$ and any finite sets of formulas $\Gamma,\Delta, \Phi, \Psi$, $\Gamma',\Delta', \Phi', \Psi'~{\subseteq}~\mathcal{S}$, the system of $B$-sequents for $\mathbf{E}^B$ contains the following rules:
\paragraph{\textbf{Structural rules}}\hspace{-.4cm}:
\\

$B$-initial sequents
\\

\quad
\infer[in_t]{\Bcon{\alpha}{}{}{\alpha}}{}
\qquad
\infer[in_f]{\Bcon{}{\alpha}{\alpha}{}}{}
\\

$B$-weakening
\\

\quad
\infer[weak]{\Bcon{\Gamma',\Gamma}{\Psi',\Psi}{\Phi,\Phi'}{\Delta,\Delta'}}{\Bcon{\Gamma}{\Psi}{\Phi}{\Delta}}
\\

$B$-cuts
\\

\quad
\infer[cut_t]{\Bcon{\Gamma}{\Psi}{\Phi}{\Delta}}{\Bcon{\alpha,\Gamma}{\Psi}{\Phi}{\Delta} \quad \Bcon{\Gamma}{\Psi}{\Phi}{\Delta,\alpha}}
\quad
\infer[cut_f]{\Bcon{\Gamma}{\Psi}{\Phi}{\Delta}}{\Bcon{\Gamma}{\alpha,\Psi}{\Phi}{\Delta} \quad \Bcon{\Gamma}{\Psi}{\Phi,\alpha}{\Delta}} 
\paragraph{\textbf{Logical rules}}\hspace{-.4cm}:
\\

Conjunction
\\

\quad
\infer[\Rightarrow_{\ff} \wedge]{\BconCONT{}{}{}{, \alpha\wedge\beta}}{\BconCONT{}{}{}{, \alpha} \quad \BconCONT{}{}{}{, \beta}}
\qquad
\infer[\Rightarrow_{\ttt} \wedge]{\BconCONT{}{}{, \alpha\wedge\beta}{}}{\BconCONT{}{}{, \alpha}{} \quad \BconCONT{}{}{, \beta}{}}
\\

\quad
 \infer[\wedge \Rightarrow_{\ff}]{\BconCONT{\alpha\wedge\beta, }{}{}{}}{\BconCONT{\alpha,\beta, }{}{}{}}
\qquad\qquad\infer[\wedge \Rightarrow_{\ttt}]{\BconCONT{}{\alpha\wedge\beta, }{}{}}{\BconCONT{}{\alpha,\beta, }{}{}}
\\

Disjunction
\\

\quad
\infer[\vee \Rightarrow_{\ttt}]{\BconCONT{}{\alpha\vee\beta, }{}{}}{\BconCONT{}{\alpha, }{}{} \quad \BconCONT{}{\beta, }{}{}}
\qquad\infer[\vee \Rightarrow_{\ff}]{\BconCONT{\alpha\vee\beta, }{}{}{}}{\BconCONT{\alpha, }{}{}{} \quad \BconCONT{\beta, }{}{}{}}
\\

\quad\infer[\Rightarrow_{\ttt} \vee]{\BconCONT{}{}{, \alpha\vee\beta}{}}{\BconCONT{}{}{, \alpha,\beta}{}}
\qquad\qquad\infer[\Rightarrow_{\ff} \vee]{\BconCONT{}{}{}{, \alpha\vee\beta}}{\BconCONT{}{}{}{, \alpha,\beta}}
 \\
 
Negation
\\

\quad
\infer[\neg \Rightarrow_{\ff}]{\BconCONT{\neg\alpha, }{}{}{}}{\BconCONT{}{}{, \alpha}{}}
\qquad\infer[\neg \Rightarrow_{\ttt}]{\BconCONT{}{\neg\alpha, }{}{}}{\BconCONT{}{}{}{, \alpha}}
\qquad
\infer[\Rightarrow_{\ttt} \neg]{\BconCONT{}{}{, \neg\alpha}{}}{\BconCONT{\alpha, }{}{}{}}
\qquad
\infer[\Rightarrow_{\ff} \neg]{\BconCONT{}{}{}{, \neg\alpha}}{\BconCONT{}{\alpha, }{}{}}
\\

\newpage
We illustrate the above system with two examples of derivations using the system of $B$-sequents for $\mathbf{E}^B$, namely, derivations for 
$\Bcon{\neg(\alpha\vee\beta)}{}{}{\neg\alpha\wedge\neg\beta}$ and for $\Bcon{}{(\alpha\wedge\beta)\vee\gamma}{(\alpha\vee\gamma)\wedge(\beta\vee\gamma)}{}$:

\vspace{.5cm}

\hfil
\infer[\neg \Rightarrow_{\ttt}]{\Bcon{\neg(\alpha\vee\beta)}{}{}{\neg\alpha\wedge\neg\beta}}{
\infer[\Rightarrow_{\ff} \vee]{\Bcon{}{}{\alpha\vee\beta}{\neg\alpha\wedge\neg\beta}}{
\infer[\Rightarrow_{\ttt} \wedge]{\Bcon{}{}{\alpha,\beta}{\neg\alpha\wedge\neg\beta}}{
\infer[\Rightarrow_{\ttt} \neg]{\Bcon{}{}{\alpha,\beta}{\neg\alpha}}{
\infer[weak]{\Bcon{}{\alpha}{\alpha,\beta}{}}{
\infer[in_f]{\Bcon{}{\alpha}{\alpha}{}}{}}}
\quad
\infer[\Rightarrow_{\ttt} \neg]{\Bcon{}{}{\alpha,\beta}{\neg\beta}}{
\infer[weak]{\Bcon{}{\beta}{\alpha,\beta}{}}{
\infer[in_f]{\Bcon{}{\beta}{\beta}{}}{}}
}}}}
\hfil

\vspace{.5cm}
\hfil
\infer[\Rightarrow_t\wedge]{\Bcon{}{(\alpha\wedge\beta)\vee\gamma}{(\alpha\vee\gamma)\wedge(\beta\vee\gamma)}{}}{
\infer[\Rightarrow_f\vee]{\Bcon{}{(\alpha\wedge\beta)\vee\gamma}{\alpha\vee\gamma}{}}{
\infer[\vee\Rightarrow_f]{\Bcon{}{(\alpha\wedge\beta)\vee\gamma}{\alpha,\gamma}{}}{
\infer[\wedge\Rightarrow_f]{\Bcon{}{\alpha\wedge\beta}{\alpha}{}}{\infer[weak]{\Bcon{}{\alpha, \beta}{\alpha}{}}{\infer[in_f]{\Bcon{}{\alpha}{\alpha}{}}{}}}
\quad
\infer[in_f]{\Bcon{}{\gamma}{\gamma}{}}{}}}
\quad
\infer[\Rightarrow_f\vee]{\Bcon{}{(\alpha\wedge\beta)\vee\gamma}{\beta\vee\gamma}{}}{
\infer[\vee\Rightarrow_f]{\Bcon{}{(\alpha\wedge\beta)\vee\gamma}{\beta,\gamma}{}}{
\infer[\wedge\Rightarrow_f]{\Bcon{}{\alpha\wedge\beta}{\beta}{}}{\infer[weak]{\Bcon{}{\alpha, \beta}{\beta}{}}{\infer[in_f]{\Bcon{}{\beta}{\beta}{}}{}}}
\quad
\infer[in_f]{\Bcon{}{\gamma}{\gamma}{}}{}}}
}
\hfil

We write \infer[]{\Bcon{\Gamma}{\Psi}{\Delta}{\Phi}}{*}, to denote that the $B$-sequent $\Bcon{\Gamma}{\Psi}{\Delta}{\Phi}$ has a derivation using the rules presented above. 


\begin{theor}[Soundness] Every derivable $B$-sequent is valid.
\end{theor}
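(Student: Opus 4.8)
The plan is to proceed by induction on the height of the derivation of a $B$-sequent, verifying that each rule of the calculus preserves validity. The guiding observation is that validity of a $B$-sequent $\Bcon{\Gamma}{\Psi}{\Phi}{\Delta}$ is the same as the absence of a \emph{countermodel}: unfolding the meaning clause ${\bigvee}\nacc\agn{:}\Gamma$ or ${\bigvee}\acc\agn{:}\Delta$ or ${\bigvee}\nrej\agn{:}\Phi$ or ${\bigvee}\rej\agn{:}\Psi$ over all agents, its failure amounts to the existence of an $\agn\in SOC^*$ with $\acc\agn{:}\Gamma$, $\nrej\agn{:}\Psi$, $\rej\agn{:}\Phi$ and $\nacc\agn{:}\Delta$, which is exactly $\nBent{\Gamma}{\Psi}{\Phi}{\Delta}$. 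Thus a $B$-sequent is valid precisely when the homonymous $B$-entailment holds. At the outset I would also record that, since every $\agn\in SOC^*$ is a homomorphism into the functional matrix $\mathfrak{M}^*$ and the attitudes $\acc/\nacc$ and $\rej/\nrej$ partition $\mathbf{4}$, each recursive clause 2.1--2.12 may be used as a \emph{biconditional}: within a complementary pair (e.g.\ 2.5 and 2.7) the two stated ``if'' hypotheses are mutually exclusive and exhaustive and so are the two consequents, so each clause forces the converse of its partner.

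For the base cases, the initial sequents $\Bcon{\alpha}{}{}{\alpha}$ and $\Bcon{}{\alpha}{\alpha}{}$ unfold to $\Bent{\alpha}{}{}{\alpha}$ and $\Bent{}{\alpha}{\alpha}{}$, which are exactly the two $B$-reflexivities already proved. For the structural rules no new argument is needed: $weak$ is an instance of $B$-monotonicity, and $cut_t$, $cut_f$ are literally $B$-transitivity $(t)$ and $(f)$. Their soundness is therefore inherited from the properties established just above.

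The substance of the induction lies in the logical rules, which I would treat uniformly by countermodel transfer: assuming the conclusion has a countermodel $\agn$, I use the appropriate biconditional clause to show that this same $\agn$ is a countermodel for one of the premises, contradicting the induction hypothesis that every premise is valid. For instance, in $\Rightarrow_{\ff}\wedge$ the formula $\alpha\wedge\beta$ occupies the rejection position, so a countermodel gives $\rej\agn{:}\alpha\wedge\beta$; by clause 2.6 read as a biconditional, $\rej\agn{:}\alpha$ or $\rej\agn{:}\beta$, placing $\agn$ as a countermodel of the corresponding premise. The negation rules run the same way but swap sides: in $\neg\Rightarrow_{\ff}$ a countermodel of the conclusion yields $\nrej\agn{:}\neg\alpha$, whence $\nacc\agn{:}\alpha$ by clause 2.4, which is exactly the countermodel condition on $\alpha$ in the premise. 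Each of the remaining conjunction, disjunction and negation rules reduces in one line to a single clause among 2.1--2.12, selected by matching the connective with the position (accept, not-accept, reject, not-reject) in which the compound formula is introduced.

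The routine but error-prone part, and the step I expect to demand the most care, is the bookkeeping: reading off, for each of the four positions of a $B$-sequent, which cognitive attitude a countermodel must realize, and pairing each of the twelve logical rules with the one recursive clause (in biconditional form) that licenses it. The negation rules, which exchange the $\acc/\nacc$ side with the $\rej/\nrej$ side, are where a sign slip is most likely. Once the biconditional status of the clauses is secured, however, every logical rule collapses to a single countermodel transfer and the induction closes.
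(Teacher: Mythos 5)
Your proof is correct and follows essentially the same route as the paper's: induction on the derivation, with the initial sequents discharged by the two $B$-reflexivities, weakening by $B$-monotonicity, the cuts by the $B$-transitivities, and each logical rule reduced to the matching recursive clause among 2.1--2.12. Your one refinement---justifying up front that the clauses may be read as biconditionals because $\acc/\nacc$ and $\rej/\nrej$ are complementary---makes explicit a step the paper uses only tacitly (its soundness argument invokes, e.g., clause 2.7 in the converse of its stated direction), and your contrapositive countermodel-transfer phrasing is just the paper's direct argument read backwards.
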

\begin{proof}[Proof] Induction on the number of $B$-sequent rules applied in a derivation.

\noindent 
Base case: The $B$-sequent is of the form \infer[in_t]{\Bcon{\alpha}{}{}{\alpha}}{}
or
\infer[in_f]{\Bcon{}{\alpha}{\alpha}{}}{}. 

\noindent On the one hand, $\Bcon{\alpha}{}{}{\alpha}$ means that there is no agent $\agn\in SOC$ such that $\acc\agn{:}\alpha$ and $\nacc\agn{:}\alpha$. Thus, $\Bcon{\alpha}{}{}{\alpha}$ is valid.
On the other hand, $\Bcon{}{\alpha}{\alpha}{}$ means that there is not agent $\agn\in SOC$ such that $\nrej\agn{:}\alpha$ and $\rej\agn{:}\alpha$. Thus, $\Bcon{}{\alpha}{\alpha}{}$ is valid.

\noindent
Inductive hypothesis: For any $B$-sequent \infer[]{\Bcon{\Gamma}{\Psi}{\Phi}{\Delta}}{*} whose derivation has up to $k$ application of rules, we have that $\Bcon{\Gamma}{\Psi}{\Phi}{\Delta}$  is valid.

\noindent
Inductive step: Consider a $B$-sequent derived by $k+1$ application of rules.

\noindent Case of [$weak$]. The $B$-sequent is of the form
\infer[weak]{\Bcon{\Gamma',\Gamma}{\Psi',\Psi}{\Phi,\Phi'}{\Delta,\Delta'}}{\infer{\Bcon{\Gamma}{\Psi}{\Phi}{\Delta}}{*}}.
By inductive hypothesis, $\Bcon{\Gamma}{\Psi}{\Phi}{\Delta}$ is valid, and by $B$-monotonicity, $\Bcon{\Gamma',\Gamma}{\Psi',\Psi}{\Phi,\Phi'}{\Delta,\Delta}$ is valid. 

\noindent Case of [$cut_t$]. The $B$-sequent is of the form
\infer[cut_t]{\Bcon{\Gamma}{\Psi}{\Phi}{\Delta}}{
\infer{\BconCONT{}{\alpha, }{}{}}{*'} 
\quad 
\infer{\BconCONT{}{}{, \alpha}{}}{*''}
}.
By inductive hypothesis $\BconCONT{}{\alpha, }{}{}{}$ and $\BconCONT{}{}{, \alpha}{}$ are valid, and by $B$-transitivity(t), we have that $\BconCONT{}{}{}{}$ is valid. The case of [$cut_f$] is similar.

\noindent Case of [$\Rightarrow_{\ttt}\wedge$] The $B$-sequent is of the form 
\infer[\Rightarrow_{\ttt} \wedge]{\BconCONT{}{}{, \alpha\wedge\beta}{}}{
\infer[]{\BconCONT{}{}{, \alpha}{}}{*'} 
\quad 
\infer[]{\BconCONT{}{}{, \beta}{}}{*''}
}. Given the inductive hypothesis, we have that $\BconCONT{}{}{, \alpha}{}$ e $\BconCONT{}{}{, \beta}{}$ are both valid, which means that there is no agent $\agn$ such that $\nacc\agn{:}\alpha$ and $\nacc\agn{:}\beta$. This way, by the recursive clause 2.7, there is no agent $\agn$ such that $\nacc\agn{:}\alpha\wedge\beta$. Thus, $\BconCONT{}{}{, \alpha\wedge\beta}{}$ is valid. 

\noindent Case of [$\Rightarrow_{\ff}\vee$] The $B$-sequent is of the form
\infer[\Rightarrow_{\ff} \vee]{\BconCONT{}{}{}{, \alpha\vee\beta}}{
\infer{\BconCONT{}{}{}{, \alpha,\beta}}{*}
}.  Given the inductive hypothesis, we have that $\BconCONT{}{}{}{, \alpha,\beta}$ is valid, which means that there is no agent $\agn$  such that $\rej\agn{:}\alpha$ or $\rej\agn{:}\beta$. So, by the recursive clause 2.10, there is no agent $\agn$ such that $\rej\agn{:}\alpha\vee\beta$.Thus, $\BconCONT{}{}{}{, \alpha\vee\beta}$ is valid.

\noindent Case of [${\neg}\Rightarrow_{\ttt}$]The $B$-sequent is of the form
\infer[\neg \Rightarrow_{\ttt}]{\BconCONT{}{\neg\alpha, }{}{}}{
\infer[]{\BconCONT{}{}{}{, \alpha}}{*}
}. Given the inductive hypothesis we have that $\BconCONT{}{}{}{, \alpha}$ is valid. This means that there is no agent $\agn$ such that $\rej\agn{:}\alpha$. This way, by the recursive clause 2.1, there is no agent $\agn$ such that $\acc\agn{:}\neg\alpha$. Thus, $\BconCONT{}{\neg\alpha, }{}{}$ is valid.

The other cases are proved in a similar fashion.
\end{proof}

Consider an arbitrary derivation of $B$-sequents \infer{\BconCONT{}{}{}{}}{*}. We call \emph{previous $B$-sequents} all the $B$-sequents that generate $\BconCONT{}{}{}{}$ in the given derivation.

For the completeness result, we will prove first that if a given $B$-sequent is valid, all the previous $B$-sequents on its derivation tree are valid.

\begin{theor}[Inversion theorem] Let $I$ be any inference rule other than weakening. If the conclusion of the application of $I$ is a valid $B$-sequent, then all the $B$-sequents previous to the application of $I$ are valid.  
\end{theor}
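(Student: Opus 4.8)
The plan is to reduce the entire statement to a single agent-wise implication: for every rule $I$ other than weakening, and for each premise of $I$, the meaning-disjunction attached to the conclusion should entail, for every agent $\agn\in SOC^*$, the meaning-disjunction attached to that premise. Recall that $\Bcon{\Gamma}{\Psi}{\Phi}{\Delta}$ is valid exactly when the disjunction $\bigvee\nacc\agn{:}\Gamma$ or $\bigvee\acc\agn{:}\Delta$ or $\bigvee\nrej\agn{:}\Phi$ or $\bigvee\rej\agn{:}\Psi$ holds for every $\agn$. Once such an implication is in place, validity of the conclusion (its disjunction holds for all agents) immediately forces validity of each premise. First I would fix notation: in every logical rule the conclusion and its premises share the same context, contributing a fixed disjunction $D(\agn)$, and differ only in one principal disjunct; hence it suffices to compare these principal disjuncts.

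The key tool is that the recursive clauses 2.1--2.12, although stated as one-directional conditionals, in fact hold as biconditionals. This is because each agent assigns a unique value in $\mathbf{4}$, so that for every formula exactly one of $\acc\agn{:}\varphi$, $\nacc\agn{:}\varphi$ and exactly one of $\rej\agn{:}\varphi$, $\nrej\agn{:}\varphi$ obtains (Table~\ref{tab:vdv}). Thus, for example, from $\acc\agn{:}(\alpha\vee\beta)$ one derives that $\acc\agn{:}\alpha$ or $\acc\agn{:}\beta$: otherwise $\nacc\agn{:}\alpha$ and $\nacc\agn{:}\beta$, whence clause 2.11 yields $\nacc\agn{:}(\alpha\vee\beta)$, contradicting exclusivity. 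The converse directions of all twelve clauses follow in exactly this fashion, and it is precisely these converses---not invoked in the soundness proof---that inversion requires.

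With the biconditionals available, the proof proceeds by cases on $I$. For the initial sequents $in_t$ and $in_f$ there are no premises and nothing to show. For $cut_t$ and $cut_f$, each premise-disjunction is obtained from the conclusion-disjunction by adjoining a single extra disjunct (for $cut_t$, namely $\nacc\agn{:}\alpha$ in one premise and $\acc\agn{:}\alpha$ in the other); hence the conclusion-disjunction trivially entails each premise-disjunction, which is just $B$-monotonicity read in reverse. For the single-premise logical rules (both $\wedge\Rightarrow$ rules, both $\Rightarrow\vee$ rules, and all four negation rules) the principal disjunct of the conclusion is, by the appropriate biconditional, equivalent to the principal disjunct(s) of the premise---for instance $\acc\agn{:}\neg\alpha \Leftrightarrow \rej\agn{:}\alpha$ for $\Rightarrow_{\ttt}\neg$, and $\nrej\agn{:}(\alpha\vee\beta)$ holding iff $\nrej\agn{:}\alpha$ or $\nrej\agn{:}\beta$ for $\Rightarrow_{\ff}\vee$---so conclusion and premise are validity-equivalent. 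For the two-premise logical rules $\Rightarrow_{\ff}\wedge$, $\Rightarrow_{\ttt}\wedge$, $\vee\Rightarrow_{\ttt}$, $\vee\Rightarrow_{\ff}$ the conclusion's principal disjunct is, via the biconditional, a conjunction (for example $\nrej\agn{:}(\alpha\wedge\beta)$ holds iff $\nrej\agn{:}\alpha$ and $\nrej\agn{:}\beta$), each conjunct being the principal disjunct of one premise; since a conjunction entails each conjunct, the conclusion-disjunction entails each premise-disjunction.

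The main obstacle is conceptual rather than computational: securing the converse directions of the recursive clauses, since soundness relied only on the stated ``if'' directions. The remainder is uniform bookkeeping of which attitude occupies which of the four positions. It is also worth pinning down why weakening is the sole exception: its premise-disjunction is a \emph{strict sub-disjunction} of the conclusion-disjunction (the mirror image of the cut situation), so a conclusion may be valid purely on account of the weakened-in formulas while the premise fails---exactly the failure the theorem sidesteps by excluding weakening.
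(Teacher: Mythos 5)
Your proof is correct, but it takes a genuinely different route from the paper's. The paper argues by induction on the number of non-weakening rule applications in the derivation and works on the countermodel side: for each logical rule it converts a countermodel of a premise into a countermodel of the conclusion using the recursive clauses in their stated ``if'' direction (e.g.\ $\nacc\agn{:}\alpha$ yields $\nacc\agn{:}\alpha\wedge\beta$ by clause 2.7), and it handles the cut cases by $B$-monotonicity, since each premise of a cut is a weakening of its conclusion. You instead work on the dual side---the meaning-disjunctions---and dispense with induction altogether, reducing everything to the one-step, premise-wise entailment claim; this is exactly what the statement literally asserts and what the completeness proof actually uses, and it quietly sidesteps a delicacy in the paper's tree-wide formulation (validity does not propagate past weakenings occurring higher in the tree, e.g.\ a valid cut premise obtained by weakening an invalid sequent, so the paper's ``all previous $B$-sequents'' must be read locally anyway). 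Your lemma that clauses 2.1--2.12 hold as biconditionals is sound, and your derivation of it from the exclusivity and exhaustivity of the attitude pairs is the right argument; one correction, though: you locate the need for the converses in the wrong place relative to the paper. Because $\acc/\nacc$ and $\rej/\nrej$ are complementary, the converse of each clause is just the contrapositive of the forward clause for the dual attitude; consequently, on the countermodel side the inversion theorem needs only the forward directions, whereas it is the paper's \emph{soundness} proof (e.g.\ its $\Rightarrow_{\ttt}\wedge$ case, which passes from ``no agent with $\nacc\agn{:}\alpha$ or $\nacc\agn{:}\beta$'' to ``no agent with $\nacc\agn{:}\alpha\wedge\beta$'') that tacitly invokes the ``only if'' readings. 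In sum, your framing buys a shorter, induction-free argument with the semantic assumption isolated in an explicit lemma, while the paper's buys uniformity of presentation with its soundness proof at the cost of an unnecessary induction.
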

\begin{proof}[Proof] Induction on the number of application of rules distinct from $B$-weakening.

\noindent
Base case: The $B$-sequent $\BconCONT{}{}{}{}$ is derived from the rules $in_t$ or $in_f$, thus there are no $B$-sequents previous to $\BconCONT{}{}{}{}$ by the application of any rule other than $B$-weakening.

\noindent
Inductive Hypothesis: All of the previous $B$-sequents of a given valid $B$-sequent$\Bcon{\Gamma}{\Psi}{\Phi}{\Delta}$, whose derivations have up to~$k$ applications of rules other than weakening, are valid.

\noindent
Inductive step: Let $\Bcon{\Gamma}{\Psi}{\Phi}{\Delta}$ be a valid $B$-sequent that may be derived with $k+1$ applications of the rules other than $B$-weakening.

\noindent Case of [$cut_t$]. The valid $B$-sequent has a derivation of the form 
\infer[cut_t]{\BconCONT{}{}{}{}}{
\infer{\BconCONT{}{\alpha, }{}{}}{*'} 
\quad 
\infer{\BconCONT{}{}{, \alpha}{}}{*''}
}.
Since $\BconCONT{}{}{}{}$ is valid, there is no agent $\agn$ such that  $\acc s{:}\Gamma$ and $\nacc s{:}\Delta$ and $\rej s{:}\Phi$ and $\nrej s{:}\Psi$. By $B$-monotonicity, we have that there is no agent $\agn$ such that $\acc s{:}\alpha,\Gamma$ and $\nacc s{:}\Delta$ and $\rej s{:}\Phi$ and $\nrej s{:}\Psi$. Similarly, there is no agent $\agn$ such that $\acc s{:}\Gamma$ and $\nacc s{:}\Delta,\alpha$ and $\rej s{:}\Phi$ and $\nrej s{:}\Psi$. Thus, $\Bcon{\alpha,\Gamma}{\Psi}{\Phi}{\Delta}$ and $\Bcon{\Gamma}{\Psi}{\Phi}{\Delta,\alpha}$ are both valid, and, by inductive hypothesis, all of their previous $B$-sequents are valid.
The case of [$cut_f$] is similar.

\noindent Case of [$\Rightarrow_{\ttt}\wedge$] The valid $B$-sequent has a derivation of the form 
\infer[\Rightarrow_{\ttt} \wedge]{\BconCONT{}{}{, \alpha\wedge\beta}{}}{
\infer[]{\BconCONT{}{}{, \alpha}{}}{*'} 
\quad 
\infer[]{\BconCONT{}{}{, \beta}{}}{*''}
}. 
Since $\BconCONT{}{}{, \alpha\wedge\beta}{}$ is valid, there is no agent $\agn$ such that  $\acc s{:}\Gamma$ and $\nacc s{:}\Delta, \alpha\wedge\beta$ and $\rej s{:}\Phi$ and $\nrej s{:}\Psi$. By the recursive clause 2.7, we have that there is no agent $\agn$ such that $\nacc s{:}\alpha\wedge\beta$, since there is no agent $\agn$ such that $\nacc\agn{:}\alpha$ and $\nacc\agn{:}\beta$. This way, there is no agent $\agn$ such that $\acc s{:}\Gamma$ and $\nacc s{:}\Delta, \alpha$ and $\rej s{:}\Phi$ and $\nrej s{:}\Psi$, and also there no agent $\agn$ such that $\acc s{:}\Gamma$ and $\nacc s{:}\Delta, \beta$ and $\rej s{:}\Phi$ and $\nrej s{:}\Psi$. Thus, $\BconCONT{}{}{, \alpha}{}$ and $\BconCONT{}{}{, \beta}{}$ are both valid, and, by inductive hypothesis, all of their previous $B$-sequents are valid.

\noindent Case of [$\Rightarrow_{\ff}\vee$] The valid $B$-sequent has a derivation of the form 
\infer[\Rightarrow_{\ff} \vee]{\BconCONT{}{}{}{, \alpha\vee\beta}}{
\infer{\BconCONT{}{}{}{, \alpha,\beta}}{*}
}. 
Since $\BconCONT{}{}{}{, \alpha\vee\beta}$ is valid, there is no agent $\agn$ such that $\acc s{:}\Gamma$ and $\nacc s{:}\Delta$ and $\rej s{:}\Phi, \alpha\vee\beta$ and $\nrej s{:}\Psi$. By the recursive clause 2.10, there is no agent $\agn$ such that $\rej\agn{:}\alpha\vee\beta$ if $\rej\agn{:}\alpha$ or $\rej\agn{:}\beta$, thus $\BconCONT{}{}{}{, \alpha,\beta}$ is valid, and, by inductive hypothesis, all of its previous $B$-sequents are valid.

\noindent Case of [${\neg}\Rightarrow_{\ttt}$] The valid $B$-sequent has a derivation of the form 
\infer[\neg \Rightarrow_{\ttt}]{\BconCONT{}{\neg\alpha,}{}{}}{
\infer[]{\BconCONT{}{}{}{, \alpha}}{*}
}.
Since $\BconCONT{}{\neg\alpha,}{}{}$ is valid, there is no agent $\agn$ such that $\acc s{:}\neg\alpha, \Gamma$ and $\nacc s{:}\Delta$ and $\rej s{:}\Phi$ and $\nrej s{:}\Psi$. By the recursive clause 2.1, there is no $\agn$ such that $\acc\agn{:}\neg\alpha$ if $\rej\agn{:}\alpha$. Thus, $\BconCONT{}{}{}{, \alpha}$ is valid.

The other cases have similar proofs.
\end{proof}

\begin{theor}[Completeness]\label{teo:compl} If $\BconCONT{}{}{}{}$ is a valid $B$-sequent in $\mathbf{E}^B$, then there is a derivation of $\BconCONT{}{}{}{}$.
\end{theor}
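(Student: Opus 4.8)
The plan is to argue by induction on the total number of connective occurrences appearing across the four positions of the $B$-sequent, using the Inversion theorem to strip off connectives while preserving validity and bottoming out at purely atomic $B$-sequents. Given a valid $\Bcon{\Gamma}{\Psi}{\Phi}{\Delta}$, if some position contains a compound formula $\chi$, then $\chi$ has a principal connective and sits in one of the four positions; for each of the three connectives and each of the four positions there is exactly one matching logical rule having $\chi$ as its principal formula (twelve rules in all), so $\Bcon{\Gamma}{\Psi}{\Phi}{\Delta}$ is the conclusion of such a rule applied to $\chi$. By the Inversion theorem its immediate premises are then valid, and each premise has strictly fewer connectives, since $\chi$ is deleted and only its less complex immediate subformulas are (re)inserted into possibly other positions. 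The inductive hypothesis supplies derivations of the premises, and applying the rule forward produces the derivation sought. Note that the cut rules are never invoked in this direction: only the logical rules, the initial sequents $in_t, in_f$, and $weak$ are used.

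The heart of the argument is the base case, where every formula in $\Gamma,\Psi,\Phi,\Delta$ is a propositional variable. Here I would appeal to the correspondence of Table \ref{tab:vdv}: an agent $\agn\in SOC^*$ falsifies $\Bcon{\Gamma}{\Psi}{\Phi}{\Delta}$ precisely when $\agn(p)\in\acc=\{\top,\ttt\}$ for every $p\in\Gamma$, $\agn(p)\in\nacc=\{\ff,\bot\}$ for every $p\in\Delta$, $\agn(p)\in\rej=\{\ff,\top\}$ for every $p\in\Phi$, and $\agn(p)\in\nrej=\{\bot,\ttt\}$ for every $p\in\Psi$. Because an agent is a homomorphism and assigns values to distinct variables independently, such a falsifying agent exists if and only if the constraints bearing on each single variable are jointly satisfiable. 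The only unsatisfiable constraints arise from $\acc\cap\nacc=\emptyset$ and $\rej\cap\nrej=\emptyset$: the remaining pairwise intersections of these four regions are all nonempty, and membership of a variable in three or more of the positions already forces one of these two empty intersections. Hence a valid atomic $B$-sequent must contain a variable $p$ with $p\in\Gamma\cap\Delta$ or $p\in\Phi\cap\Psi$.

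In the first case $\Bcon{p}{}{}{p}$ is an instance of $in_t$, and one application of $weak$, adjoining $\Gamma\setminus\{p\}$, $\Psi$, $\Phi$, and $\Delta\setminus\{p\}$ to the four positions, yields $\Bcon{\Gamma}{\Psi}{\Phi}{\Delta}$; in the second case one starts from the $in_f$ instance $\Bcon{}{p}{p}{}$ and weakens analogously. This closes the base case and, together with the inductive step, the theorem. (It is also worth recording at the outset that a $B$-sequent is valid exactly when the corresponding $B$-entailment $\Bent{\Gamma}{\Psi}{\Phi}{\Delta}$ holds, since the sequent reading ${\bigvee}\nacc\agn{:}\Gamma$ or ${\bigvee}\acc\agn{:}\Delta$ or ${\bigvee}\nrej\agn{:}\Phi$ or ${\bigvee}\rej\agn{:}\Psi$ is the negation of the joint condition $\acc\agn{:}\Gamma$, $\nacc\agn{:}\Delta$, $\rej\agn{:}\Phi$, $\nrej\agn{:}\Psi$.)

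The main obstacle is precisely this base-case characterization: it is where the non-deterministic, symmetric nature of the semantics does the work, and where one must confirm that the two ``clash'' conditions $\acc\cap\nacc=\emptyset$ and $\rej\cap\nrej=\emptyset$ are the sole obstructions to assembling a countermodel from independently chosen variable values. Everything else is either routine or already provided: that the twelve logical rules exhaust all combinations of principal connective and position, that each backward rule application strictly lowers the connective count (counting occurrences across the four positions, so that the induction is well founded even though the positions are sets and a reinserted subformula may already be present), and that validity propagates upward through the logical rules, which is exactly the content of the Inversion theorem.
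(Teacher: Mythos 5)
Your proposal is correct and takes essentially the same route as the paper's proof: induction on the number of connective occurrences in the $B$-sequent, with the Inversion theorem licensing backward application of the twelve logical rules in the inductive step, and the base case closed by $in_t$ or $in_f$ followed by $B$-weakening. In fact your base case is more careful than the paper's, which merely asserts that a valid atomic $B$-sequent contains a variable $\alpha$ with $\alpha\in\Gamma\cap\Delta$ or $\alpha\in\Phi\cup\Psi$ (an evident typo for $\Phi\cap\Psi$); your argument that $\acc\cap\nacc=\emptyset$ and $\rej\cap\nrej=\emptyset$ are the only obstructions to assembling a countermodel variable-by-variable supplies exactly the justification the paper omits.
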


\begin{proof}[Proof] Induction on the number of connectives of the $B$-sequent $\BconCONT{}{}{}{}$.

\noindent
Base case: $\BconCONT{}{}{}{}$ has no logical connectives. In this case, all of the formulas are propositional variables. Given that $\BconCONT{}{}{}{}$ is valid, there must be some propositional variable $\alpha$ such that either $\alpha\in\Gamma\cap\Delta$ or $\alpha\in\Phi\cup\Psi$. Therefore $\BconCONT{}{}{}{}$ may be proved either by \infer[in_t]{\Bcon{\alpha}{}{}{\alpha}}{} or by \infer[in_f]{\Bcon{}{\alpha}{\alpha}{}}{} and, eventually, by applications of the $B$-weakening rule.

\noindent
Inductive hypothesis: Let $\BconCONT{}{}{}{}$ be a valid $B$-sequent which has up to $m\in\mathbb{N}$ occurrences of logical connectives. Then, there is a derivation of $\BconCONT{}{}{}{}$ free of applications of $B$-cut.

The inductive step, $m+1$, is obtained by cases according to the most external connective of the $B$-sequent formulas.

\noindent
[Case $\neg\alpha\in\Gamma$] Let $\Gamma'$ be the set of formulas obtained from $\Gamma$ by removing all of the occurrences of $\neg\alpha$. We may then infer $\BconCONT{}{}{}{}$ by:
\infer={\BconCONT{}{}{}{}}{\infer[\neg_f\Rightarrow_{\tt}]{\Bcon{\neg\alpha, \Gamma'}{\Psi}{\Delta}{\Phi}}{\infer{\Bcon{\Gamma'}{\Psi}{\Phi, \alpha}{\Delta}}{*}}}, where the double line indicates consecutive applications of $B$-weakening.
By the inversion theorem, $\Bcon{\Gamma'}{\Psi}{\Phi, \alpha}{\Delta}$ is valid, and, since it has at most~$m$ logical connectives, the inductive hypothesis implies the existence of a derivation of this $B$-sequent without the use of $B$-cuts. 
By applying the rule [${\neg}\Rightarrow_f$], we have $\Bcon{\neg\alpha, \Gamma'}{\Psi}{\Delta}{\Phi}$, whose derivation is free of $B$-cuts. 
The proofs for the cases where a formula of the form $\neg\alpha$ occurs in $\Delta, \Phi$ and $\Psi$ are similar to the present first case.

\noindent
[Case $\alpha\wedge\beta\in\Gamma$] Let $\Gamma'$ be the set of formulas obtained from $\Gamma$ by removing all of the occurrences of $\alpha\wedge\beta$. We may then infer $\BconCONT{}{}{}{}$ by:
\infer={\BconCONT{}{}{}{}}{\infer[\wedge\Rightarrow_t]{\Bcon{\alpha\wedge\beta, \Gamma'}{\Psi}{\Phi}{\Delta}}{\infer{\Bcon{\alpha,\beta, \Gamma'}{\Psi}{\Phi}{\Delta}}{*}}}. By the inversion theorem, $\Bcon{\alpha,\beta, \Gamma'}{\Psi}{\Phi}{\Delta}$ is valid, and since it has at most $m$ logical connectives, the inductive hypothesis implies the existence of a derivation without the use of $B$-cuts. 
By applying the rule [$\wedge\Rightarrow_t$], we have $\BconCONT{}{\alpha\wedge\beta, }{}{}$. 
The proofs for the cases where $\alpha\wedge\beta\in\Psi$, $\alpha\vee\beta\in\Phi$ and $\alpha\vee\beta\in\Delta$ are similar to the present case.

\noindent
[Case $\alpha\vee\beta\in\Gamma$] Let $\Gamma'$ be the set of formulas obtained from $\Gamma$ by removing all of the occurrences of $\alpha\vee\beta$, we can infer $\BconCONT{}{}{}{}$ by:
\infer={\BconCONT{}{}{}{}}{\infer[\vee\Rightarrow_t]{\Bcon{\alpha\vee\beta, \Gamma'}{\Psi}{\Phi}{\Delta}}{\infer{\Bcon{\alpha, \Gamma'}{\Psi}{\Phi}{\Delta}}{*'}\quad\infer{\Bcon{\beta, \Gamma'}{\Psi}{\Phi}{\Delta}}{*''}}}. By the inversion theorem, $\Bcon{\alpha,\Gamma'}{\Psi}{\Phi}{\Delta}$ and $\Bcon{\beta,\Gamma'}{\Psi}{\Phi}{\Delta}$ are valid and as the sum of their connectives is less than $m$, so by the inductive hypothesis we know that there are derivations of the latter without the use of $B$-cuts. 
By applying the rule [$\vee\Rightarrow_t$], we have $\BconCONT{}{\alpha\vee\beta, }{}{}$, whose derivation is free of the rules of $B$-cuts.
The proofs for the cases where $\alpha\vee\beta\in\Psi$, $\alpha\wedge\beta\in\Phi$ and $\alpha\wedge\beta\in\Delta$ are similar to the present case.
\end{proof}

The logic $\mathbf{E}^B$, inspired by {First Degree Entailment}, is associated to a {$B$-entailment}, a consequence relation which has four positions allowing for reasoning with incomplete or inconsistent information. By adopting {$B$-entailment} as a semantic consequence relation in the logic $\mathbf{E}^B$, we allow for a logic which expresses different forms of reasoning in terms of acceptance (or not) and rejection (or not) of statements, without the truth being in any way privileged over the falsity 
in its inferences.
This privilege is found in the original Dunn-Belnap logic~$\mathbf{E}$, whose Tarskian consequence relation is defined solely in terms of the preservation of truth-values containing the truth.
We thus believe that this new definition of consequence relation is more adequate for a formalism that intends to deal with inconsistent and partial informational content.




\begin{spacing}{0.8}

\end{spacing}
\end{document}